%% file: main.tex
\documentclass[11pt]{article}
\usepackage[margin=1in]{geometry}
\usepackage{amsmath,amsthm,amsfonts,amssymb}
\usepackage[colorlinks=true, urlcolor=blue, linkcolor=blue, citecolor=magenta]{hyperref}
\usepackage{enumitem}
\usepackage[T1]{fontenc}
\usepackage{aliascnt}
\usepackage{graphicx}
\usepackage{tikz}
\usetikzlibrary{arrows.meta}

\usepackage{booktabs,tabularx,array}
\usepackage{caption}
\usepackage{bbm}

\allowdisplaybreaks

\usepackage[capitalise,nameinlink]{cleveref}
\Crefname{theorem}{Theorem}{Theorems}
\Crefname{subsection}{Subsection}{Subsections}
\Crefname{section}{Section}{Sections}
\Crefformat{equation}{(#2#1#3)}
\crefname{subsection}{Subsec.}{Subsecs.}
\Crefname{subsection}{Subsec.}{Subsecs.}

\newtheorem{theorem}{Theorem}[section]

\newcommand{\newaliastheorem}[2]{%
  \newaliascnt{#1}{theorem}%
  \newtheorem{#1}[#1]{#2}%
  \aliascntresetthe{#1}%
  \crefname{#1}{#2}{#2s}%
  \Crefname{#1}{#2}{#2s}%
}

\newaliastheorem{proposition}{Proposition}
\newaliastheorem{lemma}{Lemma}

\theoremstyle{definition}
\newaliastheorem{definition}{Definition}
\newaliastheorem{remark}{Remark}

\usepackage{thmtools}
\usepackage{thm-restate}

\DeclareMathOperator*{\E}{\mathbb E}

\newcommand{\floorbra}[1]{\left\lfloor #1 \right\rfloor}
\newcommand{\ceilbra}[1]{\left\lceil #1 \right\rceil}

\newcommand{\Bcal}{\mathcal{B}}
\newcommand{\Fcal}{\mathcal{F}}
\newcommand{\Ical}{\mathcal{I}}
\newcommand{\Lcal}{\mathcal{L}}
\newcommand{\Mcal}{\mathcal{M}}
\newcommand{\Qcal}{\mathcal{Q}}
\newcommand{\Ucal}{\mathcal{U}}
\newcommand{\Vcal}{\mathcal{V}}

\newcommand{\F}{\mathbb{F}}

\title{Strong Inapproximability for a Promise Rank Problem} 
\author{
Venkatesan Guruswami
\thanks{Simons Institute for the Theory of Computing, and Departments of ECCS \& Mathematics, UC Berkeley. Email: {\tt venkatg@berkeley.edu}. Research supported in part by NSF grant CCF-2211972, a DARPA grant under Contract No. HR0011262E031, and a Simons Investigator award.}
\and 
Xuandi Ren\thanks{Department of EECS, UC Berkeley. Email: \texttt{xuandi\_ren@berkeley.edu}. Supported in part by NSF grant CCF-2228287 and V.G's Simons Investigator award.}
\and 
Shaoxuan Tang\thanks{Institute for Interdisciplinary Information Sciences, Tsinghua University. Some of this work was done when visiting UC Berkeley. Email: \texttt{tsx23@mails.tsinghua.edu.cn}.}
}
\date{}

\begin{document}

\maketitle
\thispagestyle{empty}

\begin{abstract}

Given a linear subspace of $n \times n$ matrices over $\F_{2^r}$ that is promised to contain a matrix of rank $1$, we prove that it is hard to find a matrix of rank $n^{o(1/\log \log n)}$, assuming NP doesn't have sub-exponential algorithms. In addition to being a basic problem, the hardness of this problem, even for the exact version, drove recent PCP-free inapproximability results for minimum distance and shortest vector problems concerning codes and lattices.

\smallskip
The proof combines the concept of superposition soundness introduced by Khot and Saket with moment matrices. To produce a rank-gap of $1$ vs. $k$, the reduction runs in time $n^{O(\log k)}$. We also give another moment-matrix-based construction which runs in time $n^{O(k)}$ but works for any finite field $\mathbb F_q$.

\end{abstract}

\newpage
\tableofcontents
\thispagestyle{empty}

\clearpage

\input{sec/intro}
\input{sec/pre}

\input{sec/f2}
\input{sec/moment}

\section*{Acknowledgement}
 The assistance of ChatGPT was used with some of the proofs in \Cref{sec:moment-matrix-construction}. All proofs were carefully verified by the authors.

\bibliographystyle{alpha}
\bibliography{ref}

\appendix
\input{sec/inapprox_corollaries}
\input{sec/proof_Khot_Saket}

\end{document}

%% file: sec/intro.tex
\section{Introduction}
\label{sec:introduction}

Given a linear subspace $\Lcal\subseteq \F^{N\times N}$ that is promised to
contain a matrix of rank one, we study the problem of finding a nonzero matrix in
$\Lcal$ of minimum rank.

This problem is closely related to the problem of finding the
\textit{minimum rank distance} of a rank-metric code
\cite{Courtois01,GaboritZemor16,Ravagnani16}.  In fact, there is a simple and
classical embedding of Hamming-metric codes into this setting.  Given a linear
code $C\subseteq \F^N$, consider the diagonal matrix code
\[
    \operatorname{Diag}(C):=\{\operatorname{diag}(c):c\in C\}
    \subseteq \F^{N\times N}.
\]
For every $c\in C$,
\[
    \operatorname{rank}(\operatorname{diag}(c))=\|c\|_0,
\]
and hence the minimum rank distance of $\operatorname{Diag}(C)$ is exactly the
minimum Hamming distance of $C$.  Consequently, hardness results for the
Minimum Distance Problem (MDP) \cite{DMS03,CW12,AK14,Mic14,BGLR25} directly
translate to hardness for approximating the minimum rank in a matrix subspace.

However, this diagonal reduction also explains what it does \emph{not} prove.
The YES case inherited from MDP only promises a nonzero matrix of rank $d(C)$,
the minimum distance of the code, which is generally not $1$.  In fact, it is trivial to check if the minimum distance of a code is 1.
It is therefore
natural to ask whether the problem remains hard under the stronger promise that
the subspace contains a nonzero matrix of smallest possible rank, namely rank $1$.

A second motivation comes from recent PCP-free inapproximability results for
sparse vector problems, including the Minimum Distance Problem and the Shortest
Vector Problem \cite{BGLR25}.  These reductions start from systems of quadratic
equations and view each quadratic constraint as a linear constraint on a matrix:
if $X=xx^\top$, then a quadratic form in $x$ becomes a linear form in the
entries of $X$.  Hence honest solutions to the original quadratic system
correspond to rank-one matrices.  The main soundness issue is to rule out
spurious higher-rank matrices.

In \cite{BGLR25}, this issue is handled via a non-overlap lemma.  Specifically,
if $C$ is a linear code of distance $d(C)$ over $\mathbb F$, then honest
rank-one matrices in the tensor code $C\otimes C$ attain the minimum Hamming
weight $d(C)^2$.  On the other hand, every matrix in $C\otimes C$ of rank at
least $2$ has Hamming weight at least $\alpha\cdot d(C)^2$, for some
$\alpha>1$ depending on $\mathbb F$.  Thus the Hamming-weight gap driving the MDP
hardness is, at its core, a structural gap between rank-one matrices and
matrices of rank at least two.  This suggests stripping away the Hamming-weight
objective and asking for the rank gap directly.

In this work, we show strong hardness of distinguishing
\[
    \exists\,0\ne A\in\Lcal \text{ with } \operatorname{rank}(A)=1
    \qquad\text{from}\qquad
    \forall\,0\ne A\in\Lcal,\ \operatorname{rank}(A)>k
\]
over any fixed finite field of characteristic 2.  Specifically, we give a deterministic reduction from
\textsc{3Sat} producing a rank gap of $1$ versus $k$, with matrix dimension
$n^{O(\log k)}$.  The resulting inapproximability consequences are summarized
in the following main theorem.

\begin{restatable}{theorem}{maininapproximability}
\label{thm:intro-main-inapproximability}
Let
\[
    \operatorname{minrank}(\Lcal)
    :=\min\{\operatorname{rank}(A):0\ne A\in\Lcal\}.
\]
For every fixed integer $r \ge 1$, no polynomial-time algorithm can, given a linear subspace
$\Lcal\subseteq\F_{2^r}^{N\times N}$, distinguish between
\begin{itemize}[leftmargin=2em,itemsep=0.15ex,topsep=0.35ex,parsep=0pt]
    \item (YES) $\operatorname{minrank}(\Lcal)=1$,
    \item (NO) $\operatorname{minrank}(\Lcal)>\gamma$,
\end{itemize}
in the following regimes:
\begin{enumerate}[label=\textup{(\alph*)},leftmargin=2em,itemsep=0.25ex,topsep=0.35ex,parsep=0pt]
    \item assuming $\mathsf{NP}\ne\mathsf{P}$, when $\gamma>1$ is any constant;
    \item assuming
    $\mathsf{NP}\nsubseteq \mathsf{DTIME}(2^{\log^{O(1)} n})$, when
    $\gamma=2^{(\log N)^{1-\epsilon}}$ for any fixed $0<\epsilon<1$;
    \item assuming
    $\mathsf{NP}\nsubseteq \bigcap_{\delta>0}\mathsf{DTIME}(2^{n^\delta})$, when
    $\gamma=N^{c/\log\log N}$ for some fixed constant $c>0$.
\end{enumerate}
\end{restatable}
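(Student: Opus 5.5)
This theorem is a corollary of the main reduction, so the plan is to first isolate that reduction and then instantiate parameters for each regime. The core statement I would establish (and later prove in the body of the paper) is the following. There is a deterministic reduction that maps a \textsc{3Sat} instance $\varphi$ of size $n$ and a parameter $k$ to a subspace $\Lcal\subseteq\F_{2^r}^{N\times N}$ with $N=n^{O(\log k)}$, in time $\mathrm{poly}(N)$. If $\varphi$ is satisfiable, then $\operatorname{minrank}(\Lcal)=1$. If $\varphi$ is unsatisfiable, then $\operatorname{minrank}(\Lcal)>k$.

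The route to this reduction goes through quadratic systems. I would first reduce \textsc{3Sat} to a system of quadratic equations over $\F_2$ that has the superposition-soundness property of Khot and Saket. In the NO case, no low-dimensional "superposition" of assignments satisfies the system. Khot--Saket's theorem gives this with size blowup $n^{O(\log k)}$ for superposition width $k$, and I would take it as the black box proved in the appendix. Next I would linearize with moment matrices. The subspace $\Lcal$ consists of matrices $X$, indexed by monomials, that satisfy the linear constraints $X$ must obey if it were $x x^\top$ for a satisfying assignment $x$ (with the constant $1$ appended). Completeness is immediate: $x x^\top$ lies in $\Lcal$ and has rank $1$. For soundness, a matrix $A=\sum_{i\le k} u_i v_i^\top$ of rank at most $k$ in $\Lcal$ should yield a superposition of dimension at most $k$ satisfying the system, which contradicts superposition soundness. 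Passing from $\F_2$ to $\F_{2^r}$ would use the fact that $\F_{2^r}$ is an $r$-dimensional $\F_2$-space, so the rank lift costs only a factor $r=O(1)$.

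With the reduction in hand, each regime is a parameter choice. For (a), take $k=\gamma$ constant, so $N=n^{O(1)}$ and a polynomial-time distinguisher decides \textsc{3Sat}. For (b), take $k=2^{(\log N)^{1-\epsilon}}$. Then $\log N=O(\log k\log n)$, which gives $\log k=(\log N)^{1-\epsilon}$ and hence $\log N=O((\log n)^{1/\epsilon})$. So $N=2^{\mathrm{polylog}(n)}$, and a polynomial-time algorithm would place \textsf{NP} in quasipolynomial time. For (c), take $\log k=n^{\delta}$, so $\log N\approx n^{\delta}\log n$ and $\log\log N\approx\delta\log n$. Then $\log k/\log N\approx 1/\log n\approx \delta/\log\log N$, which gives $k=N^{c/\log\log N}$ and runtime $2^{n^{\delta+o(1)}}$ for every $\delta$.

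The main obstacle is soundness of the moment-matrix step. A rank-$k$ matrix in $\Lcal$ need not be symmetric or PSD (there is no positivity over $\F_2$), so the argument must extract from the factors $u_i,v_i$ a genuine superposition that satisfies each constraint. I would try to impose enough consistency constraints (symmetry, and the diagonal agreeing with the row indexed by the constant $1$) that the column space of $A$ restricted to the variable coordinates gives a $k$-dimensional space of assignments whose quadratic constraints hold in the superposition sense.
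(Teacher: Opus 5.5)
\textbf{(c) has a quantifier error.} Your top-level plan matches how the paper derives this theorem: obtain all three regimes from a $1$ vs.\ $k$ reduction with $N\le n^{C\log k}$, then tune $k$. Regimes (a) and (b) are fine. In (c), with $\log k=n^{\delta}$ you have $\log N\le Cn^{\delta}\log n$ and $\log\log N\approx\delta\log n$. So what you actually get is $\log k\ge\frac{\delta}{C}\cdot\frac{\log N}{\log\log N}$, i.e.\ $c\approx\delta/C$ (you dropped the factor $C$). That constant depends on $\delta$. A polynomial-time distinguisher for one fixed $c$ therefore gives $2^{n^{\delta+o(1)}}$-time algorithms only for $\delta\gtrsim cC$, not for every $\delta>0$. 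So you cannot conclude $\mathsf{NP}\subseteq\bigcap_{\delta>0}\mathsf{DTIME}(2^{n^\delta})$. The paper's fix is to use the hypothesis first: fix $\delta_0$ with \textsc{3Sat}$\notin\mathsf{DTIME}(2^{n^{\delta_0}})$, take $k=\lceil 2^{n^{\delta_0/2}}\rceil$, and only then choose $c$ small in terms of $C$ and $\delta_0$.

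\textbf{The main gap is soundness of the reduction, which you leave open.} Three ingredients are missing.

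(i) The Khot--Saket guarantee cannot be ``no superposition satisfies the system.'' With an appended constant coordinate, the linearized constraint counts constant terms with a parity that depends on the decomposition. The paper avoids this by homogenizing with a new variable $x_0$ (Booleanity becomes $x_i(x_i+x_0)$), so every equation of $\Qcal$ is constant-free. But for a constant-free system, trivial superpositions always exist, e.g.\ two copies of any assignment. The usable conclusion is therefore that the \emph{aggregate} $\sum_i\tau^{(i)}$ vanishes.

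(ii) To turn $A$ into a superposition you need a \emph{symmetric} decomposition $A=\sum_{i=1}^{t}u^{(i)}u^{(i)\top}$ with $t\le\floorbra{3k/2}$, which exists over $\F_2$ for symmetric $A$. Symmetry comes from the equal-union constraints, and $(u^{(i)}_R)^2=u^{(i)}_R$ turns diagonal terms into the linear terms. A column-space view, or a general factorization $\sum_i u_iv_i^\top$, yields only bilinear evaluations, never $\sum_i g(u^{(i)})$.

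(iii) Most importantly, zero aggregate does not force $A=0$. For example, $uv^\top+vu^\top=uu^\top+vv^\top+(u+v)(u+v)^\top$ has zero aggregate. The paper rules this case out using the full equal-union constraints $A_{S,T}=A_{S',T'}$ whenever $S\cup T=S'\cup T'$:
\begin{itemize}
\item Zero aggregate gives $A_{U,U}=0$, hence $A_{U,V}=A_{U\cup V,U\cup V}=0$ whenever $|U\cup V|\le d$.
\item Let $R$ be a minimal union carrying a nonzero entry, so $d<|R|\le 2d$. The $\floorbra{|R|/2}$-subsets versus the $\ceilbra{|R|/2}$-subsets of $R$ index a permutation submatrix.
\item This gives $\operatorname{rank}(A)\ge\binom{d+1}{\floorbra{(d+1)/2}}>k$ once $d=\Theta(\log k)$.
\end{itemize}
Your suggested constraints (symmetry plus diagonal equal to the constant row) are not enough. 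Under zero aggregate they force only the diagonal and the constant row to vanish, and leave off-diagonal entries like those of $uv^\top+vu^\top$ uncontrolled.
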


We present the reduction in \Cref{sec:rank-one-vs-low-rank}. The proof combines the superposition-soundness framework of Khot and Saket
\cite{KhotSaket2014} with linearized moment matrices.  A satisfying assignment
to the starting quadratic system gives a rank-one moment matrix.  Conversely, a
low-rank feasible matrix can be decomposed into a bounded number of symmetric
rank-one pieces; these pieces behave like several assignments satisfying the
quadratic system in superposition.  The soundness analysis of Khot--Saket soundness rules out such
superpositions, while the moment-matrix equal-union constraints ensure that the
remaining zero-sum case cannot hide a nonzero low-rank matrix.

\medskip
We also include a simpler direct
moment-matrix reduction in \Cref{sec:moment-matrix-construction}. This
construction avoids the superposition-soundness and works over every finite
field $\F_q$, at the cost of producing matrices of dimension $n^{O(k)}$.  The
corresponding inapproximability statement is the following.

\begin{restatable}{theorem}{anyfieldinapproximability}
\label{thm:intro-any-field-inapproximability}
Let
\[
    \operatorname{minrank}(\Lcal)
    :=\min\{\operatorname{rank}(A):0\ne A\in\Lcal\}.
\]
For every fixed finite field $\F_q$, no polynomial-time algorithm can, given a linear subspace
$\Lcal\subseteq\F_q^{N\times N}$, distinguish between
\begin{itemize}[leftmargin=2em,itemsep=0.15ex,topsep=0.35ex,parsep=0pt]
    \item (YES) $\operatorname{minrank}(\Lcal)=1$,
    \item (NO) $\operatorname{minrank}(\Lcal)>\gamma$,
\end{itemize}
in the following regimes:
\begin{enumerate}[label=\textup{(\alph*)},leftmargin=2em,itemsep=0.25ex,topsep=0.35ex,parsep=0pt]
    \item assuming $\mathsf{NP}\ne\mathsf{P}$, when $\gamma>1$ is any constant;
    \item assuming
    $\mathsf{NP}\nsubseteq \mathsf{DTIME}(2^{\log^{O(1)} n})$, when
    $\gamma=(\log N)^{1-\epsilon}$ for any fixed $0<\epsilon<1$;
    \item assuming
    $\mathsf{NP}\nsubseteq \bigcap_{\delta>0}\mathsf{DTIME}(2^{n^\delta})$, when
    $\gamma=c\log N/\log\log N$ for some fixed constant $c>0$.
\end{enumerate}
\end{restatable}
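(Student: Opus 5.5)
We derive \Cref{thm:intro-any-field-inapproximability} from the moment-matrix reduction of \Cref{sec:moment-matrix-construction}. We take that reduction as giving, for every finite field $\F_q$ and integer $k\ge 1$, a deterministic reduction from \textsc{3Sat} on $n$ variables that runs in time $n^{O(k)}$. Its output is a subspace $\Lcal\subseteq\F_q^{N\times N}$ with $N=n^{O(k)}$, satisfying completeness and soundness. For completeness: a satisfying assignment yields a rank-one moment matrix $\tilde x\tilde x^\top\in\Lcal$, where $\tilde x$ is the vector of monomials of degree at most $k$. For soundness: if the formula is unsatisfiable, every nonzero $A\in\Lcal$ has rank greater than $k$.

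The plan has two steps. First, we fix the parameter regimes. Second, we check that the reduction time and the resulting gap match each regime, using the relation $\log N=\Theta(k\log n)$.

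In regime (a), take $k=\lceil\gamma\rceil$, a constant. Then $N=n^{O(1)}$ and the reduction runs in polynomial time. A polynomial-time distinguisher would therefore decide \textsc{3Sat}, contradicting $\mathsf{NP}\ne\mathsf{P}$.

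In regime (b), take $k=\lceil(\log n)^{C}\rceil$ for a constant $C$ to be fixed below. Then $N=2^{O((\log n)^{C+1})}$, so the reduction runs in quasi-polynomial time $2^{\log^{O(1)}n}$, and $\log N=\Theta((\log n)^{C+1})$. We need $(\log N)^{1-\epsilon}\le k$. This amounts to $(C+1)(1-\epsilon)<C$, i.e.\ $C>(1-\epsilon)/\epsilon$, which absorbs the constant factors for $n$ large. Composing the reduction with a polynomial-time distinguisher then puts \textsc{3Sat} in $\mathsf{DTIME}(2^{\log^{O(1)}n})$.

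In regime (c), take $k=n^{\delta}$. Then $\log N=\Theta(n^\delta\log n)$ and the running time is $2^{O(n^\delta\log n)}$. Since $\log\log N=\Theta(\log n)$, we get $c\log N/\log\log N=\Theta(c\,n^\delta)$. Choosing the constant $c$ small enough relative to the hidden constant in $N=n^{O(k)}$ ensures $\gamma\le k$. The constant $c$ must not depend on $\delta$; this holds because the ratio $\log N/(k\log n)$ is bounded by the absolute constant from the construction. Contradicting $\mathsf{NP}\nsubseteq\bigcap_{\delta>0}\mathsf{DTIME}(2^{n^\delta})$ then requires, for every $\delta'>0$, a time bound below $2^{n^{\delta'}}$. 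We get this by running the argument with any $\delta<\delta'$, since $2^{O(n^\delta\log n)}\le 2^{n^{\delta'}}$.

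There is one more point to handle. The matrix dimension $N$ may need to be exactly determined, or padded, so that the statement holds for all $N$. Padding with zero rows and columns preserves minrank, so we can assume any convenient $N$.

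The main obstacle is the soundness of the underlying construction, not the parameter bookkeeping. If that must be proved here, the approach is as follows. Suppose $A\in\Lcal$ is nonzero with $\operatorname{rank}(A)\le k$. Write $A=\sum_{i\le k}u_iv_i^\top$. The moment-matrix consistency constraints, which identify entries indexed by equal multiset unions of monomials, force the column space to behave like evaluations of a low-degree structure. One then extracts a point, or a combination of at most $k$ points, satisfying the quadratic system. Degree $k$ suffices to separate up to $k$ candidate assignments via interpolating polynomials, and this is precisely why the dimension is $n^{O(k)}$.

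We will take the soundness statement of \Cref{sec:moment-matrix-construction} as given, and reduce the theorem to the parameter calculations above.
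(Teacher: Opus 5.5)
Your regimes (a) and (b) match the paper's: a constant $k\ge\gamma$, and $k=\lceil(\log n)^{C}\rceil$ with $C>(1-\epsilon)/\epsilon$. Regime (c), however, has a genuine gap. The claim that $c$ can be chosen independently of $\delta$ is false.

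With $k=n^\delta$ you have $(1-\delta)k\log n\le\log\binom{n}{k}\le\log N\le C_0k\log n$. Hence $\log\log N=(\delta+o(1))\log n$, so your $\Theta(\log n)$ hides a factor of $\delta$. Consequently
\[
    \frac{c\log N}{\log\log N}\;\ge\;\frac{c(1-\delta)}{\delta+o(1)}\,k .
\]
For any fixed $c$, this exceeds $k$ once $\delta$ is small compared with $c$, so $c$ must shrink with $\delta$. Bounding $\log N/(k\log n)$ by an absolute constant does not help, because the loss sits in the denominator $\log\log N$.

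So with a single $c$, the soundness guarantee $\operatorname{minrank}>k$ yields $\operatorname{minrank}>\gamma$ only for $\delta\gtrsim c$. That places \textsc{3Sat} in $\mathsf{DTIME}(2^{n^{\Theta(c)}})$, not in $\mathsf{DTIME}(2^{n^{\delta'}})$ for every $\delta'>0$. Your ``run it for every $\delta'$'' strategy therefore cannot work with this reduction.

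The fix is to reverse the quantifiers, as the paper does:
\begin{enumerate}
    \item By NP-hardness of the source problem, the hypothesis $\mathsf{NP}\nsubseteq\bigcap_{\delta>0}\mathsf{DTIME}(2^{n^\delta})$ gives one fixed $\delta_0>0$ such that the source problem (of size $M$) is not in $\mathsf{DTIME}(2^{M^{\delta_0}})$.
    \item Take $k=\lceil M^{\delta_0/2}\rceil$. The reduction followed by a polynomial-time distinguisher then runs in time $2^{O(M^{\delta_0/2}\log M)}\le 2^{M^{\delta_0}}$.
    \item Choose $c<\delta_0/(2C_0)$, so that $c\log N/\log\log N<k$.
\end{enumerate}
The phrase ``for some fixed constant $c>0$'' allows $c$ to depend on $\delta_0$.

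Two minor points do not affect the corollary. First, \Cref{thm:elementary-moment-rank-gap} reduces from Boolean \textsc{QuadEq} over $\F_q$, not from \textsc{3Sat}. This is harmless: compose with the polynomial-size reduction from \textsc{3Sat}, or argue directly from NP-hardness of \textsc{QuadEq} as the paper does. Second, your closing soundness sketch (``multiset unions'', interpolation among $k$ candidate points) is not the paper's argument. The paper uses set unions, a nonzero flat level $r_e=r_{e+1}>0$, and a common eigenvector of commuting idempotents $T_i$. Since you take soundness as given, this does not matter here.
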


\subsection{Proof overview}
\label{subsec:rank-proof-overview}

The key ingredient in both reductions is the pseudo-moment matrix.  Fix a
moment level $d$. We introduce a pseudo-moment coordinate $y_R$ for every Boolean monomial $x^R:=\prod_{i\in R}x_i$ of degree at most $2d$. Note that the vector $(y_R)$ is not assumed to come from an actual Boolean assignment, for which reason it is called pseudo-moment vector. The associated degree-$d$ pseudo-moment matrix\footnote{This is the standard moment-matrix object from the Sum-of-Squares literature: for a pseudoexpectation
$\widetilde{\mathbb E}$ over the Boolean cube, the moment
matrix has entries
$\widetilde{\mathbb E}[x^Sx^T]=\widetilde{\mathbb E}[x^{S\cup T}]$.
The Lasserre/SoS hierarchy additionally imposes normalization and
positive-semidefiniteness of moment matrix.  See,
e.g., \cite{Lasserre01,Parrilo03,Laurent09,BarakSteurer14}.} is defined by
\[
        H_d(y)_{S,T}=y_{S\cup T}.\qquad (|S|,|T|\leq d)
\]

In the matrix formulation, we impose the equal-union constraints
\[
        A_{S,T}=A_{S',T'}
        \qquad\text{whenever}\qquad
        S\cup T=S'\cup T'.
\]
This way, the matrix is not an arbitrary linearization of the products
$x^Sx^T$.  All factorizations of the same Boolean monomial $x^{S\cup T}$ are
forced to share a single pseudo-moment coordinate.

This hidden redundancy is what makes rank a useful test.  If a pseudo-moment $y_R$ is
nonzero, it is not isolated in one entry of the matrix: it appears in every
entry whose row and column labels union to $R$.  Therefore a feasible low-rank
matrix is constrained not only by the original equations, but also by the
truncated Boolean monomial algebra encoded by these equal-union identities.

We first record the simple but powerful observation related to the rank of pseudo-moment matrices that is used in both reductions, and then give an 
overview of the two reductions separately.

\begin{lemma}[Informal structural lemma, used in \Cref{lem:zero-resultant-rank-v3} and \Cref{lem:elementary-flat-level}]
\label{lemma:informal_rank}
    Let $A$ be a matrix satisfying the above equal-union constraints, and let
    $y$ be the pseudo-moment vector it represents. Suppose $R$ is a
    minimum-size set with $y_R\ne0$. Then,
    \[
        \operatorname{rank}(A)
        \ge \binom{|R|}{\floorbra{|R|/2}} .
    \]
\end{lemma}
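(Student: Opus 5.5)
Our plan is to exhibit a square submatrix of $A$ of size $\binom{|R|}{\floorbra{|R|/2}}$ that is, after reordering, triangular with nonzero diagonal. Put $m=|R|$ and $h=\floorbra{m/2}$. We index rows by the $h$-subsets $S\subseteq R$. Each such row is paired with the column indexed by the complement $\bar S := R\setminus S$. (Implicitly $|R|\le 2d$ and both $h$ and $m-h$ are at most $d$; if labels are restricted to size $\le d$ we use $h=\floorbra{m/2}$ and $\ceilbra{m/2}\le d$, which holds because $|R|\le 2d$.) For a row $S$ and a column $\bar{T}$, where $T$ is another $h$-subset of $R$, the equal-union constraints give
\[
A_{S,\bar T}=y_{S\cup \bar T}=y_{S\cup (R\setminus T)}.
\]

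The key step is to determine which entries of this $\binom{m}{h}\times\binom{m}{h}$ submatrix vanish.

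\emph{Diagonal.} When $T=S$, the union $S\cup\bar S$ equals $R$, so the diagonal entry is $y_R\ne 0$.

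\emph{Off-diagonal.} When $T\ne S$, the union $S\cup(R\setminus T)$ is a subset of $R$. It equals $R$ exactly when $T\subseteq S$. Since $|T|=|S|$ and $T\ne S$, this cannot happen, so the union is a proper subset of $R$. It therefore has size less than $|R|$, and by the minimality of $R$ we get $y_{S\cup\bar T}=0$.

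Consequently the submatrix is diagonal, with every diagonal entry equal to $y_R\ne0$. It has full rank $\binom{m}{h}$, which gives the lower bound on $\operatorname{rank}(A)$.

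The only delicate point is being precise about the hypotheses.
\begin{itemize}
\item \emph{Minimality.} We need ``minimum-size'' to mean that $y_{R'}=0$ for every $R'$ with $|R'|<|R|$. This is stronger than what we actually use, since the argument only needs $y_{R'}=0$ for proper subsets $R'\subsetneq R$.
\item \emph{Valid labels.} The labels $S$ and $\bar S$ must be legitimate row and column indices, i.e.\ of size at most $d$. This holds because $R$ indexes a pseudo-moment coordinate, so $|R|\le 2d$, and hence $\ceilbra{|R|/2}\le d$.
\end{itemize}
Beyond checking these, no real obstacle is expected: the entire argument reduces to the disjoint-complement pairing described above.
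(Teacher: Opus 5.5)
Your proof is correct and is the paper's argument: the paper takes rows indexed by the $\floorbra{|R|/2}$-subsets and columns indexed by the $\lceil |R|/2\rceil$-subsets of $R$ and observes that the submatrix is $y_R$ times a permutation matrix, which is exactly your diagonal matrix once each column is indexed by its complement. The one edge case your label check skips is $|R|=1$ in the homogenized setting $\Ucal_{n,d}$, where the empty row label does not exist; there the bound $\binom{1}{0}=1$ holds trivially because $A_{R,R}=y_R\ne0$.
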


The proof idea is to look at the submatrix with rows indexed by the
$\floorbra{|R|/2}$-subsets $F\subseteq R$ and columns indexed by the
$\ceilbra{|R|/2}$-subsets $G\subseteq R$.  Its $(F,G)$ entry is
$y_{F\cup G}$.  By the minimality of $R$, this entry is zero unless
$F\cup G=R$.  Hence the
submatrix is a permutation matrix, giving the rank
lower bound.

The two reductions use this observation in different ways.  In the
$n^{O(\log k)}$ superposition reduction, the observation turns the zero-sum case from
Khot--Saket soundness into the conclusion that the whole low-rank matrix is
zero.  In the $n^{O(k)}$ direct pseudo-moment reduction, the same observation forces the existence of a
nonzero flat level from which one can round to an honest Boolean solution.

\paragraph{The $n^{O(\log k)}$ superposition reduction over $\mathbb F_{2^r}$.}

We first prove the rank gap over $\mathbb F_2$ and then extend it to
$\mathbb F_{2^r}$ in a black-box way.  We adapt the Khot--Saket superposition
reduction \cite{KhotSaket2014} to produce a constant-free quadratic system
$\Qcal$.  After introducing a homogenizing variable $x_0$, and  introducing variables $y_S$ for non-constant monomials $x^S$ of degree at most
$d=\Theta(\log k)$,  we obtain a quadratic system with the following property.  In the YES case, an honest
satisfying assignment gives a solution to $\Qcal$.  In the NO case, if $t$
assignments satisfy $\Qcal$ in superposition, meaning that the sum of their
evaluations on every equation of $\Qcal$ is zero, then their coordinate-wise sum
is the zero assignment. 

After that, for each quadratic equation
\[
    \sum_{S,T}c_{S,T}y_Sy_T+\sum_R b_Ry_R=0
\]
in $\Qcal$, we linearize it as
\[
    \sum_{S,T}c_{S,T}A_{S,T}+\sum_R b_RA_{R,R}=0,
\]
and we also impose all equal-union constraints.  

For completeness, an honest satisfying assignment gives a feasible rank-one
moment matrix.

For soundness, suppose in the NO case that a nonzero feasible matrix $A$ has
rank at most $k$.  A decomposition lemma\footnote{This decomposition is not obvious: we require a sum of \textit{symmetric} rank-$1$ terms, which is more restrictive than a general rank decomposition. 
} shows 
\[
    A=\sum_{i=1}^t u^{(i)}u^{(i)\top},
    \qquad t\le \left\lfloor\frac{3k}{2}\right\rfloor .
\]
Substituting this decomposition
into the linearized constraints shows that the
assignments $u^{(i)}$ satisfy $\Qcal$ in superposition\footnote{The constant-free assumption is needed here.  If a quadratic equation has a nonzero constant term, substituting a decomposition
\(A=\sum_{i=1}^t u^{(i)}u^{(i)\top}\) counts that constant with a parity depending on
the number of summands, so a linearized feasible matrix yields the desired
superposition condition only when $t$ is odd. Huang~\cite{Huang15},
building on the Khot--Saket superposition framework, uses precisely this
oddness phenomenon: he proves that superposition satisfaction by an odd number
of assignments is equivalent to odd-covering.}.  By the soundness
guarantee, the sum of these assignments is zero.  Consequently, every diagonal
entry of $A$ vanishes:
\[
    A_{S,S}=\sum_i (u^{(i)}_S)^2=\sum_i u^{(i)}_S=0.
\]
The equal-union constraints then imply
$A_{S,T}=A_{S\cup T,S\cup T}=0$ whenever $|S\cup T|\le d$.

We still need to rule out nonzero entries whose union has size larger than
$d$.  Suppose such an entry exists, and choose $A_{S,T}\ne0$ with
$|S\cup T|$ minimum.  Put $R=S\cup T$.  By the structural lemma,
\[
    \operatorname{rank}(A)
    \ge \binom{|R|}{\floorbra{|R|/2}}
    \ge \binom{d+1}{\floorbra{(d+1)/2}}>k,
\]
contradicting the rank assumption.  

Finally, to work over $\mathbb F_{2^r}$, we interpret the same
$\mathbb F_2$-linear equations over the extension field; a nonzero
rank-at-most-$k$ matrix over $\mathbb F_{2^r}$ would descend, via an
$\mathbb F_2$-linear functional, to a nonzero matrix over $\mathbb F_2$ of rank
at most $rk$. So we just run the base-field reduction with gap rank parameter $rk$.

\medskip\noindent\textbf{The $n^{O(k)}$ direct reduction.}
This construction starts from \textup{\textsc{QuadEq}} over an arbitrary finite
field $\F_q$.  We set $d=k$ and build the pseudo-moment matrix whose rows and
columns are indexed by all monomials of degree at most $d$, including the empty
monomial.  We impose the same equal-union constraints, and for each 
equation of the \textup{\textsc{QuadEq}} instance
\[
    f_\ell(x)=\sum_{\substack{U\subseteq[n], \ |U|\le2}} c_{\ell,U}x^U
\]
we impose the following pseudo-moment versions of the identities $x^W f_\ell(x)=0$:
\[
    \sum_{\substack{U\subseteq[n]\\ |U|\le2}} c_{\ell,U}y_{U\cup W}=0
    \qquad \forall W, \ |W|\le 2d-2 \ .
\]
These are the pseudo-moment versions of the identities $x^W f_\ell(x)=0$.
The completeness is immediate: a Boolean solution gives a rank-one moment
matrix solution, and it is nonzero because the empty-coordinate entry is $1$.

For soundness, suppose the source instance is unsatisfiable but there is a
nonzero feasible pseudo-moment matrix $H_d(y)$ of rank at most $d$.  Let
$r_e=\operatorname{rank}H_e(y)$; this sequence is nondecreasing in $e$.  The
structural lemma shows that the first nonzero level already has a reasonably large rank that the sequence
$r_e$ cannot keep increasing strictly up to level $d$.  Hence there must be a
nonzero flat level
\[
    r_e=r_{e+1}>0
\]
for some $e<d$.

At such a flat level, multiplying a column label by $x_i$ creates no new column
direction.  Thus multiplication by $x_i$ defines a linear operator $T_i$ on the
column space $C_e$ of $H_e(y)$.  These operators satisfy the Boolean rules
\[
    T_i^2=T_i,
    \qquad
    T_iT_j=T_jT_i,
\]
and the localizing constraints translate into the operator identities
$f_\ell(T)=0$ on $C_e$ for every source equation $f_\ell$.  Since the $T_i$ are
commuting projections over $\F_q$, they have a common eigenvector with
eigenvalues $a_i\in\{0,1\}$.  The tuple $a=(a_1,\ldots,a_n)$ is therefore a
Boolean point, and the identities $f_\ell(T)=0$ imply $f_\ell(a)=0$ for all
$\ell$, contradicting unsatisfiability.

%% file: sec/pre.tex
\section{Preliminaries}
\label{sec:preliminaries}
We now formally define the concepts related to monomial-indexing and pseudo-moments used
in \Cref{sec:rank-one-vs-low-rank} and \Cref{sec:moment-matrix-construction}, which we already sketched in the introduction. 

Let $[n]:=\{1,2,\dots,n\}$ for any positive integer $n$. For a subset $S$, we write $x^S :=\prod_{i \in S} x_i$, and we set $x^\emptyset:=1$. All products of monomials and polynomials are taken in the Boolean
squarefree monomial algebra. Equivalently, over the relevant field we
identify polynomials modulo the relations \(x_i^2=x_i\), so that
\(x^Sx^T=x^{S\cup T}\).

Write 
\[
\begin{aligned}
    \Ucal_{n,d}:=\{S\subseteq \{0,1,\dots,n\}: 1\le |S|\le d\};\\
    \Vcal_{n,d}:=\{S\subseteq \{1,\ldots,n\}: 0 \le |S|\le d\}.
\end{aligned}
\]
Here \(\Ucal_{n,d}\) is used for the homogenized $n^{O(\log k)}$ construction:
it allows the coordinate \(0\) but excludes the empty monomial. The set
\(\Vcal_{n,d}\) is used for the direct $n^{O(k)}$ construction: it uses only the original variables and includes the empty monomial.

When $\mathcal W_d$ is either $\Ucal_{n,d}$ or $\Vcal_{n,d}$, define a \emph{monomial assignment of degree $d$} to be a map
\[
    \sigma:\{x^S:S\in\mathcal W_{d}\}\to \F.
\]
Equivalently, any monomial assignment of degree $d$ can be written as a \emph{degree-$d$ pseudo-moment assignment}, or \emph{pseudo-moment vector}, 
\[
    y=(y_R)_{R\in\mathcal W_{d}}.
\]

When a degree-$2d$ pseudo-moment vector is given, its associated \textit{pseudo-moment matrix} is defined as
\[
    H_d(y):=\bigl(y_{S\cup T}\bigr)_{S,T\in\mathcal W_d}.
\]
If $a$ is an actual Boolean point in $\{0,1\}^{n+1}$ (or $\{0,1\}^n$, respectively), the ``honest'' moment
assignment generated by $a$ is $y_R=a^R$.  The corresponding moment vector is
\[
    v_d(a):=(a^S)_{S\in\mathcal W_d},
\]
and in this case
\[
    H_d(y)=v_d(a)v_d(a)^\top.
\]
In particular, $H_d(y)$ has rank one whenever $v_d(a)\ne 0$.

On the other hand, any matrix $A \in \mathbb F^{\mathcal W_d \times \mathcal W_d}$ satisfying
\[
    A_{S,T}=A_{S',T'}
    \qquad\text{whenever }S\cup T=S'\cup T',
\]
is a pseudo-moment matrix $H_d(y)$ for some unique pseudo-moment vector $y\in\F^{\mathcal W_{2d}}$.  We therefore sometimes state the linear constraints on the pseudo-moment matrix in terms of the associated pseudo-moment vector $y$.

%% file: sec/f2.tex
\section{An $n^{O(\log k)}$ Construction over $\mathbb F_{2^r}$}
\label{sec:rank-one-vs-low-rank}

In this section, we prove the following theorem.

\begin{theorem}
\label{thm:rank-gap-subspace-v3}
For every fixed integer $r \ge 1$ and every integer $k\ge 1$, there is a deterministic $n^{O(\log k)}$-time reduction which maps a
\textup{\textsc{3Sat}} instance on $n$ variables to a linear subspace
\[
    \Lcal\subseteq \F_{2^r}^{N\times N},
    \qquad N=n^{O(\log k)},
\]
such that:
\begin{itemize}
    \item (YES) if the input formula is satisfiable, then $\Lcal$ contains a nonzero
    matrix of rank~$1$;
    \item (NO) if the input formula is unsatisfiable, then $\Lcal$ contains no
    nonzero matrix of rank at most $k$.
\end{itemize}
\end{theorem}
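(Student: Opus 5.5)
The plan is to follow the route sketched in \Cref{subsec:rank-proof-overview}: first prove the statement over $\F_2$ with gap parameter $k$, then lift it to $\F_{2^r}$. Over $\F_2$ I would start from a Khot--Saket-type superposition reduction \cite{KhotSaket2014}. It maps a \textsc{3Sat} instance to a constant-free quadratic system $\Qcal$ over $\F_2$, with variables indexed by $\Ucal_{n,d}$ for $d=\Theta(\log k)$, and runs in time $n^{O(d)}$. It has two properties. (i) If the formula is satisfiable, there is a nonzero honest moment vector $v_d(a)$, with homogenizing coordinate $a_0=1$, satisfying $\Qcal$. (ii) If the formula is unsatisfiable, then whenever $t\le\lfloor 3k/2\rfloor$ assignments $u^{(1)},\dots,u^{(t)}$ satisfy $\Qcal$ in superposition, we have $\sum_i u^{(i)}=0$. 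I would take this as the black box proved in the appendix, with constant terms removed via $x_0$ (using $x_0^2=x_0$ to absorb linear terms).

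The subspace $\Lcal\subseteq\F_2^{\Ucal_{n,d}\times\Ucal_{n,d}}$ consists of all symmetric matrices $A$ such that:
\begin{itemize}
\item $A$ satisfies the equal-union constraints;
\item for every equation $\sum c_{S,T}y_Sy_T+\sum b_Ry_R=0$ of $\Qcal$, the linearization $\sum c_{S,T}A_{S,T}+\sum b_RA_{R,R}=0$ holds.
\end{itemize}
\textbf{Completeness.} $A=v_d(a)v_d(a)^\top$ works. It is a pseudo-moment matrix, it is symmetric, and it satisfies each linearized equation because $y_R=y_R^2$ over $\F_2$. It is nonzero since $a_0=1$.

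\textbf{Soundness.} Take a nonzero $A\in\Lcal$ of rank at most $k$. The first key step is a decomposition $A=\sum_{i=1}^t u^{(i)}u^{(i)\top}$ with $t\le\lfloor 3k/2\rfloor$. This uses symmetry, the classification of symmetric bilinear forms over $\F_2$ (alternating versus non-alternating), and the fact that a hyperbolic plane can be written with three symmetric rank-one terms. I would argue this via a symplectic-type basis reduction. Substituting the decomposition into the linearized constraints shows that the $u^{(i)}$ satisfy $\Qcal$ in superposition: the linear terms use $A_{R,R}=\sum_i (u_R^{(i)})^2$, and there is no constant term to spoil the parity. By (ii), $\sum_i u^{(i)}=0$, so $A_{S,S}=\sum_i u^{(i)}_S=0$ for all $S$. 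The equal-union constraints then kill every $A_{S,T}$ with $|S\cup T|\le d$.

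For the remaining entries, pick $A_{S,T}\ne0$ with $R=S\cup T$ of minimum size, so $|R|\ge d+1$. Restrict to rows indexed by the $\lfloor|R|/2\rfloor$-subsets of $R$ and columns indexed by the complementary subsets; both have size at most $d$, so they are valid indices. By minimality of $|R|$, this submatrix is a permutation matrix, which gives $\operatorname{rank}(A)\ge\binom{d+1}{\lfloor(d+1)/2\rfloor}>k$ once $d\approx\log_2 k+O(\log\log k)$. This is a contradiction.

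\textbf{Lifting to $\F_{2^r}$.} Impose the same $\F_2$-linear equations over $\F_{2^r}$, and run the $\F_2$ reduction with parameter $rk$; the size is still $n^{O(\log k)}$ since $r$ is fixed. Suppose $0\ne A$ over $\F_{2^r}$ has rank at most $k$. Write $A=\sum_j \omega_j A_j$ in an $\F_2$-basis $\{\omega_j\}$ of $\F_{2^r}$. Each $A_j$ lies in the $\F_2$-solution space, because the constraints have $\F_2$ coefficients. Each $A_j$ is the image of $A$ under an $\F_2$-linear functional applied entrywise, and so it has $\F_2$-rank at most $rk$: write $A=UV$ with $U$ and $V$ having inner dimension $k$, and expand both factors over the basis. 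Since $A\ne0$, some $A_j$ is nonzero, contradicting the $\F_2$ soundness.

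\textbf{Main obstacles.} The hard step is the symmetric decomposition with $t=O(k)$, in particular handling the alternating part over characteristic $2$. The other delicate point is making sure the Khot--Saket system is genuinely constant-free and that its soundness tolerates $t$ up to $\lfloor 3k/2\rfloor$ at degree $d=O(\log k)$.
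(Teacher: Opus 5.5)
Everything you do downstream of property (ii) is correct and matches the paper:
\begin{itemize}
\item the equal-union and linearized equation constraints;
\item the decomposition into at most $\lfloor 3k/2\rfloor$ symmetric rank-one terms (your route through the classification of symmetric forms in characteristic $2$ gives the same bound as the paper's induction, which peels off $a_ia_i^\top$ when a diagonal entry is $1$, and otherwise a hyperbolic pair written as three symmetric rank-one terms);
\item the chain superposition $\Rightarrow$ zero diagonal $\Rightarrow$ all entries with $|S\cup T|\le d$ vanish $\Rightarrow$ a permutation submatrix of size $\binom{|R|}{\lfloor |R|/2\rfloor}$;
\item the entrywise-functional rank descent, run with parameter $rk$.
\end{itemize}
The gap is (ii) itself, which you take as a black box. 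You flag it as delicate, but you cannot import it from Khot--Saket. Their system has constant terms. Once a system is constant-free, the all-zero tuple trivially satisfies it in superposition, so the conclusion you need (``the aggregate vanishes'') is a new soundness statement. In the paper it is the content of \Cref{thm:homogeneous-low-weight-v3} and \Cref{thm:homogeneous-superposition-v3}, and that is where the $\F_2$ reduction does its real work.

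Your parenthetical about $x_0$ does not supply this, and naive homogenization actually breaks soundness. Suppose you only replace the constant $1$ by $x_0$ in the clause polynomials. Take the point $a$ with $a_0=0$, $a_i=1$, and all other coordinates $0$. It kills every clause polynomial on distinct variables, because each literal on a variable $j\ne i$ contributes a factor $x_0+x_j$ or $x_j$, and both are zero at $a$. So the subspace built from that system contains the nonzero rank-one matrix $v_d(a)v_d(a)^\top$ even when the formula is unsatisfiable.

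The paper's fix is to add the constraints $x^S b_i=0$ with $b_i=x_i(x_i+x_0)$ for $|S|\le d-2$, alongside $x^Sp_j=0$ for $|S|\le d-3$. The argument then runs as follows.
\begin{itemize}
\item Let $\beta_\sigma$ be a minimum representing set of the aggregate with $|\beta_\sigma|<2^{d-3}$, and let $a\in\beta_\sigma$ be nonzero.
\item A point isolator $q$ of degree at most $d-3$ exists by the dual distance of Reed--Muller codes. It gives $0=\sigma(q\,b_i)=a_i(a_i+a_0)$ for every $i$.
\item Hence $a_0=1$. Then $a$ falsifies some clause $C_j$, which contradicts $\sigma(q\,p_j)=0$.
\item This yields the dichotomy: either $\beta_\sigma=\{\mathbf 0\}$, meaning the aggregate is zero, or $|\beta_\sigma|\ge 2^{d-3}$.
\item The large case is excluded by the Khot--Saket correlation bound applied to $\sigma(gh)=\sum_i\sigma^{(i)}(g)\sigma^{(i)}(h)$. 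Restricting $h$ to have zero constant term is what makes this identity hold regardless of the parity of $t$.
\end{itemize}
Without this step, your proof is conditional on exactly the ingredient that makes the reduction sound.
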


To prove \Cref{thm:rank-gap-subspace-v3}, we first prove the result over $\mathbb F_2$, then transfer it to any $\mathbb F_{2^r}$ by a rank-descent argument. Our reduction has four steps:
\begin{enumerate}
    \item First, we reduce \textsc{3Sat} to a degree-$d$ polynomial equation system $\Bcal$ over $\F_2$ on the variables $x_0,x_1,\ldots,x_n$. The homogenizing variable $x_0$ replaces the constant 1 and  makes every equation in $\Bcal$ have zero constant term.
    \item Next, we replace each nonconstant monomial $x^S$, $S\in\Ucal_{n,d}$, by a new
    variable $y_S$, and we add the moment constraints
    $y_Sy_T=y_{S\cup T}$. This produces a quadratic equation system $\Qcal$ with zero constant term. Furthermore, we obtain Khot--Saket superposition soundness: in the NO case, if $t$ assignments
    $\tau^{(1)},\ldots,\tau^{(t)}$ satisfy every equation of $\Qcal$ in superposition (see \Cref{def:superposition-v3}), then the sum of the $t$ assignments
    \[
        \tau:=\sum_{i=1}^t \tau^{(i)}
    \]
    must vanish on every $y$-coordinate.
    \item We then linearize the quadratic system $\Qcal$ by introducing matrix variables
    $A_{S,T}$ for the products $y_Sy_T$. By adding equal-union constraints to enforce the Boolean monomial identity $x^Sx^T=x^{S\cup T}$, and by the rank observation in \Cref{lemma:informal_rank}, we get the rank gap over $\mathbb F_2$.
    \item Finally, we extend the hardness from $\F_2$ to $\F_{2^r}$ by interpreting
    the same homogeneous linear constraints over the extension field. The rank-descent lemma
    \Cref{lem:rank-descent-extension-v3} converts any nonzero matrix of rank at most $k$ over $\F_{2^r}$ into a nonzero matrix over $\F_2$ of rank at most
    $rk$.  Running the $\F_2$ construction with rank parameter $rk$ gives the
    claimed gap over $\F_{2^r}$.
\end{enumerate}

The four steps are carried out in
\Cref{subsec:z-to-x-constant-free,subsec:x-to-y-superposition,subsec:linearized-superposition-instance,subsec:f2tof2r},
respectively. For orientation, the first three variable-changing steps are summarized in
\Cref{fig:section3-roadmap}; the corresponding variables and their roles are
listed in \Cref{tab:section3-variable-layers}.

\begin{center}
\begin{tikzpicture}[
    >=Latex,
    layer/.style={
        draw,
        rounded corners,
        align=center,
        inner xsep=9pt,
        inner ysep=5pt,
        minimum height=2.1em,
        font=\normalsize
    },
    arrowlabel/.style={
        font=\small,
        align=center,
        text width=3.1cm
    }
]
\node[layer] (z) at (0,0) {$z_i$};
\node[layer] (x) at (4.1,0) {$x_0,x_1,\ldots,x_n$};
\node[layer] (y) at (8.3,0) {$y_S$};
\node[layer] (A) at (12.4,0) {$A_{S,T}$};
\draw[->] (z) -- node[midway, below=0.25em, arrowlabel]
    {\Cref{subsec:z-to-x-constant-free}} (x);

\draw[->] (x) -- node[midway, below=0.25em, arrowlabel]
    {\Cref{subsec:x-to-y-superposition}} (y);

\draw[->] (y) -- node[midway, below=0.25em, arrowlabel]
    {\Cref{subsec:linearized-superposition-instance}} (A);
\end{tikzpicture}
\captionof{figure}{Roadmap of the reduction steps in this section.}
\label{fig:section3-roadmap}
\end{center}

\begin{center}
\renewcommand{\arraystretch}{1.25}
\begin{tabularx}{\linewidth}{
    @{}
    >{\centering\arraybackslash}p{0.20\linewidth}
    >{\centering\arraybackslash}p{0.32\linewidth}
    >{\centering\arraybackslash}X
    @{}
}
\toprule
\textbf{Variables} & \textbf{Meaning} & \textbf{Used in} \\
\midrule

$z_i$
&
original \textsc{3Sat} variables
&
starting NP-hardness source
\\

$x_0,x_1,\ldots,x_n$
&
homogenized Boolean variables
&
degree-$d$ polynomial system $\Bcal$ with no constant term; $x_0$ replaces $1$
\\

$y_S$
&
variable for monomial $x^S$
&
quadratic system $\Qcal$ with Khot--Saket superposition soundness
\\

$A_{S,T}$
&
matrix variable linearizing $y_Sy_T$
&
linear matrix subspace $\Lcal_d(\Qcal)$ for the rank-gap reduction
\\

\bottomrule
\end{tabularx}
\captionof{table}{Variables defined in this section and their roles.}
\label{tab:section3-variable-layers}
\end{center}

\subsection{Reducing from \textsc{3Sat} to constant-free polynomial equations}
\label{subsec:z-to-x-constant-free}

In this subsection, before presenting our main reduction in \Cref{thm:homogeneous-low-weight-v3}, we first introduce the following duality fact used in \cite{KhotSaket2014}: any
assignment to all nonconstant monomials of degree at most $d$ can be represented as the
sum of evaluations at actual points of $\F_2^{n+1}$. 
\label{subsec:homogeneous-ks-source}
\begin{restatable}[Monomial assignments as sums of points,~\cite{KhotSaket2014}]{lemma}{monomialassignmentsaspoints}
\label{lem:monomial-assignments-as-points-v3}
Let $\F_2[x]_{\le d}$ denote the $\F_2$-vector space of multilinear polynomials in variables $x_0, \ldots, x_n$ of degree at most $d$. For every monomial assignment 
\[
\sigma : \{x^S : S \in \mathcal U_{n,d}\} \to \mathbb F_2,
\]
define $\sigma(1) = 1$ and extend $\sigma$ by linearity to all polynomials in $\F_2[x]_{\le d}$. Then, there is a subset $\beta\subseteq \mathbb F_2^{n+1}$ such that
\[
\sigma(q)=\sum_{a\in\beta}q(a)
\]
holds for every $q \in \F_2[x]_{\le d}$.
\end{restatable}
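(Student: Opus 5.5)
The plan is to treat $\sigma$ as a linear functional and write it as an $\F_2$-combination of point evaluations. Since every coefficient lies in $\F_2$, such a combination is exactly a sum over a subset $\beta$.

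\textbf{Step 1 (set-up).} Let $P:=\F_2[x]_{\le n+1}$ be the space of all multilinear polynomials in $x_0,\dots,x_n$. Multilinear polynomials are in bijection with functions $\F_2^{n+1}\to\F_2$: the $2^{n+1}$ monomials are linearly independent as functions on the cube, and there are $2^{n+1}$ of them, so the multilinear representation of each function is unique. The assignment $\sigma$, with $\sigma(1)=1$ and extended linearly, is a well-defined linear functional on $\F_2[x]_{\le d}$, because the monomials $x^S$ with $S\in\Ucal_{n,d}\cup\{\emptyset\}$ form a basis of that space.

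\textbf{Step 2 (extend to $P$).} Extend $\sigma$ to a linear functional $\tilde\sigma$ on all of $P$. The simplest choice is $\tilde\sigma(x^S)=0$ for every $|S|>d$.

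\textbf{Step 3 (expand in indicators).} For each $a\in\F_2^{n+1}$, define the indicator polynomial
\[
\delta_a(x):=\prod_{i:\,a_i=1}x_i\prod_{i:\,a_i=0}(1+x_i)\in P,
\]
which is $1$ at $a$ and $0$ at every other point of the cube. For any $q\in P$, the polynomials $q$ and $\sum_a q(a)\delta_a$ are both multilinear and agree as functions on the cube. By the uniqueness in Step 1, they are equal as polynomials. Applying $\tilde\sigma$ and using linearity gives
\[
\tilde\sigma(q)=\sum_{a}q(a)\,\tilde\sigma(\delta_a).
\]

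\textbf{Step 4 (read off $\beta$).} Set $\beta:=\{a:\tilde\sigma(\delta_a)=1\}$. Then $\sigma(q)=\tilde\sigma(q)=\sum_{a\in\beta}q(a)$ for every $q\in\F_2[x]_{\le d}$.

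The normalization $\sigma(1)=1$ needs no separate treatment: it simply forces $|\beta|$ to be odd. There is no real obstacle in this argument. The only point that needs care is the uniqueness of multilinear representations in Step 1, since that is what makes the indicator expansion in Step 3 an identity in $P$ rather than only an identity of functions.
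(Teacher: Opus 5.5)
Your proof is correct, but it reaches the conclusion by a somewhat different mechanism than the paper. The paper stays inside $\F_2[x]_{\le d}$ and argues by duality. The point evaluations $E_a$ span $(\F_2[x]_{\le d})^*$ because their common kernel is trivial. The paper proves this directly: a nonzero multilinear $q$ of degree at most $d$ is nonzero at the indicator vector of a minimal monomial of $q$. The set $\beta$ is then read off from a linear combination whose existence is shown but not computed.

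You instead extend $\sigma$ to the full multilinear space $P$. There the evaluation map is an isomorphism and the $\delta_a$ form the basis dual to the $E_a$, so $\beta$ comes out explicitly. With your zero extension, $\delta_a=\sum_{S\supseteq\operatorname{supp}(a)}x^S$. Hence $a\in\beta$ iff $\sum_{S\supseteq\operatorname{supp}(a),\,|S|\le d}\sigma(x^S)=1$ (with $\sigma(x^\emptyset)=1$), and in particular $\beta$ contains only points of Hamming weight at most $d$.

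The freedom in your Step 2 is exactly the non-uniqueness of $\beta$. Every representing set arises from some extension: take $\tilde\sigma(q):=\sum_{a\in\beta}q(a)$ on $P$. This non-uniqueness is why the paper later fixes a minimum-cardinality $\beta_\sigma$.

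The one thing you assert without proof is the linear independence of the monomials as functions on the cube in Step 1. That is precisely the fact the paper's minimal-monomial argument supplies. You can also close it with your own $\delta_a$: they show the evaluation map $P\to\F_2^{\F_2^{n+1}}$ is onto, hence injective, since both sides have dimension $2^{n+1}$.
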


Since $\sigma(1)=1$, any set
\(\beta\) representing \(\sigma\) must have odd cardinality. Among all sets representing \(\sigma\) in this sense, fix one with minimum cardinality and denote it by \(\beta_\sigma\). 
We put a proof of \Cref{lem:monomial-assignments-as-points-v3} in \Cref{sec:proof-khot-saket}.

The following point-isolator lemma is also useful in proving \Cref{thm:homogeneous-low-weight-v3}.

\begin{lemma}[Point isolator]
\label{lem:point-isolator-v3}
Let $\rho\ge 0$ be an integer, $T\subseteq \F_2^{n+1}$ with $|T|<2^\rho$, and let $a\in T$. Then there exists a
multilinear polynomial $q$ of degree at most $\rho$ such that
\[
    q(a)=1\qquad\text{and}\qquad q(b)=0\quad\text{for all }b\in T\setminus\{a\}.
\]
\end{lemma}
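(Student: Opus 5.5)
We will prove the point-isolator lemma by induction on $\rho$, using one coordinate at a time to split the set $T$.

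\textbf{Base case.} If $\rho=0$, then $|T|<1$ contradicts $a\in T$. We may therefore also treat $\rho=0$ vacuously and begin with $|T|=1$. In that case $T=\{a\}$, and the constant polynomial $q=1$ (degree $0\le\rho$) works.

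\textbf{Inductive step.} Suppose $|T|\ge 2$, so $\rho\ge 1$. Choose some $b\in T\setminus\{a\}$. Since $b\ne a$, there is a coordinate $j\in\{0,\dots,n\}$ with $b_j\ne a_j$. Define the affine linear form
\[
\ell(x)=x_j+a_j+1.
\]
Then $\ell(a)=1$, and $\ell$ vanishes on
\[
T_0:=\{c\in T: c_j\ne a_j\}.
\]
This set is nonempty because $b\in T_0$. Let
\[
T_1:=\{c\in T: c_j=a_j\}.
\]
Then $a\in T_1$ and $|T_1|\le |T|-1$.

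\textbf{Choosing the split to halve $T$.} A single coordinate need not halve $T$, so we choose $j$ more carefully. We have two options.

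The first option is to pick $j$ so that $|T_1|\le |T|/2$. If every coordinate $j$ with $T_0\ne\emptyset$ gives $|T_1|>|T|/2$, we still iterate, but then the degree could grow linearly rather than logarithmically. So this option alone does not give the bound.

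The second option, which is the main step, is a greedy argument. At each stage we hold a current set $T'\ni a$ and a product $p=\prod\ell_i$ with $p(a)=1$ that vanishes on $T\setminus T'$. If $|T'|\ge 2$, we pick a point $b\in T'\setminus\{a\}$ and a coordinate where $b$ differs from $a$. This only guarantees that we remove one point per step, which again yields degree $|T|-1$, too large.

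\textbf{Logarithmic degree via linear forms.} To get logarithmic degree, we allow general affine forms rather than single coordinates. Let $T'\ni a$ with $|T'|\ge 2$. Consider the nonzero difference vectors $c+a$ for $c\in T'\setminus\{a\}$. For a uniformly random $w\in\F_2^{n+1}$, each nonzero vector $v$ satisfies $\langle w,v\rangle=1$ with probability exactly $1/2$. So some $w$ has $\langle w,c+a\rangle=1$ for at least half of the points $c\in T'\setminus\{a\}$.

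Set
\[
\ell(x)=\langle w,x+a\rangle+1.
\]
This is an affine multilinear form of degree $1$. It satisfies $\ell(a)=1$, and it vanishes on every $c$ with $\langle w,c+a\rangle=1$. Hence the surviving set $T''=\{c\in T':\ell(c)=1\}$ satisfies
\[
|T''\setminus\{a\}|\le \tfrac12|T'\setminus\{a\}|.
\]

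\textbf{Bounding the number of steps.} Start with $|T\setminus\{a\}|\le 2^\rho-2<2^\rho-1$. After $m$ steps the number of surviving non-$a$ points is at most $(2^\rho-2)/2^m$. Since this count is an integer, it becomes $0$ after at most $\rho$ steps: after $\rho-1$ steps it is at most $(2^\rho-2)/2^{\rho-1}<2$, so at most $1$, and one more step removes that last point.

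\textbf{Conclusion.} Let $q$ be the multilinearized product of the chosen forms $\ell_1,\dots,\ell_\rho$. Reducing by $x_i^2=x_i$ does not change values on $\F_2^{n+1}$ and does not increase degree, so $\deg q\le\rho$. By construction $q(a)=1$ and $q$ vanishes on every other point of $T$.

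The only delicate point is the halving step. It is handled by the averaging argument over $w$, which is deterministic in the sense of existence; existence is all the lemma requires.
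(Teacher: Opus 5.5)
Your proof is correct, but it takes a different route from the paper's.

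\textbf{The paper's route.} It argues by duality. If the restriction map $\mathrm{RM}(\rho,n+1)\to\F_2^T$ were not surjective, a nonzero vector orthogonal to its image, extended by zero, would be a codeword of $\mathrm{RM}(\rho,n+1)^\perp=\mathrm{RM}(n-\rho,n+1)$. Its weight would be at most $|T|$, which is below that code's minimum distance $2^{\rho+1}$. Hence every function on $T$ can be interpolated, in particular the indicator of $a$.

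\textbf{Your route.} You build $q$ explicitly as a product of affine forms $\langle w,x+a\rangle+1$. Each form equals $1$ at $a$ and is chosen by averaging over $w$ to vanish on at least half of the surviving points of $T\setminus\{a\}$. Since $|T\setminus\{a\}|\le 2^\rho-2$, at most $\rho$ factors are needed.

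\textbf{What each buys.} Your argument is elementary and self-contained: it does not rely on the identification of the dual of a Reed--Muller code or on its minimum distance. The paper's argument is shorter once those coding-theory facts are granted. It also works under the weaker hypothesis $|T|<2^{\rho+1}$, whereas your halving bound needs $|T|\le 2^\rho$. Both more than cover the lemma as stated.

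\textbf{For the write-up.} Delete the abandoned coordinate-splitting and ``first/second option'' paragraphs. Drop the ``inductive step'' label, since the argument is an iteration rather than an induction on $\rho$. Finally, note that if every point of $T\setminus\{a\}$ is eliminated in fewer than $\rho$ steps, you can take the shorter product, or pad with $w=0$, which gives the constant form $1$.
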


\begin{proof} 
Let $m = n+1$. The set of evaluations of all degree-$\le \rho$ multilinear polynomials over $\F_2^m$ forms the classical Reed--Muller code $\text{RM}(\rho, m)$. We consider the evaluation map $E: \text{RM}(\rho, m) \to \F_2^T$ that restricts these polynomials to the points in $T$. To find the desired polynomial $q$, it suffices to show the evaluation map $E$ is surjective.

By basic linear algebra, a linear map is surjective if and only if the orthogonal complement of its image is trivial. Suppose $g \in \F_2^T$ is a vector in this orthogonal complement. This means $g$ satisfies
\[
    \sum_{v \in T} g(v) q(v) = 0
\]
for every polynomial $q$ of degree at most $\rho$. We can extend $g$ to a vector over the entire space $\F_2^m$ by setting $g(v) = 0$ for all $v \notin T$. The above equation then states that $g$ is orthogonal to every codeword in $\text{RM}(\rho, m)$, therefore $g$ is in the dual code $\text{RM}(\rho, m)^\perp$. 

It is a standard fact from coding theory (e.g., \cite{MacWilliamsSloane1977}) that the dual of $\text{RM}(\rho, m)$ is $\text{RM}(m-\rho-1, m)$, and its minimum Hamming distance is $2^{m - (m-\rho-1)} = 2^{\rho+1}$. However, any dual codeword supported on $T$ would have Hamming weight at most $|T| < 2^\rho$, which contradicts the minimum distance bound. 

Therefore, no such non-zero dual codeword exists. The evaluation map is indeed surjective, implying we can interpolate any function on $T$, including the indicator function of the point $a$.
\end{proof}

We now give the reduction from \textsc{3Sat} to a system of degree-$d$ constant-free polynomial equations. This is a constant-free analogue of the Khot--Saket construction. 
We homogenize the usual Boolean constraints by introducing a variable $x_0$, which plays the role of the constant $1$ in honest
assignments, and by adding the equations $x_i(x_i+x_0)=0$.  If $x_0=1$, the
clause polynomials encode the original \textup{\textsc{3Sat}} instance.  If
$x_0=0$, these equations force every variable to vanish as well. 
\begin{theorem}[Constant-free low-weight soundness]
\label{thm:homogeneous-low-weight-v3}
For every integer $d\ge 4$,  
there is a deterministic $n^{O(d)}$-time
reduction from a \textup{\textsc{3Sat}} instance on $n$ variables to a system $\Bcal$ of constant-free degree-$d$ polynomial equations in $n+1$ variables over $\F_2$ 
such that:
\begin{itemize}
    \item (YES) if the input formula is satisfiable, then $\Bcal$ has a satisfying
    assignment $a\in\F_2^{n+1}$ with $a_0=1$;
    \item (NO) if the input formula is unsatisfiable and $\sigma$ is a monomial
    assignment of degree $d$ satisfying all equations in $\Bcal$, then either
    $\beta_\sigma=\{\mathbf 0\}$ or $|\beta_\sigma|\ge 2^{d-3}$. 
\end{itemize}
\end{theorem}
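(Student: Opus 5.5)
Following Khot--Saket, I would build $\Bcal$ by multiplying every homogenized Boolean and clause constraint by all monomials of degree up to $d-3$. Then \Cref{lem:monomial-assignments-as-points-v3} and \Cref{lem:point-isolator-v3} show that a small $\beta_\sigma$ would have to contain a genuine satisfying point.

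\textbf{Construction.} For each clause $C$ on variables $z_i,z_j,z_l$, write the clause polynomial $p_C(x_i,x_j,x_l)$ of degree $3$. It vanishes on $\{0,1\}^3$ exactly at satisfying assignments, e.g. $(1+x_i)(1+x_j)(1+x_l)$ for the all-positive clause. Homogenize it by replacing each constant $1$ by $x_0$, obtaining $\tilde p_C=(x_0+x_i)(x_0+x_j)(x_0+x_l)$ (literal-adjusted). This has no constant term. Also take $h_i:=x_i(x_i+x_0)$ for $1\le i\le n$.

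For every multilinear monomial $x^S$ with $|S|\le d-3$ (including $S=\emptyset$), put into $\Bcal$ the multilinearized equations $x^S\tilde p_C=0$ and $x^S h_i=0$. Every equation has degree at most $d$ and zero constant term, and there are $n^{O(d)}$ of them.

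\textbf{YES case.} Given a satisfying $z$, set $a_0=1$ and $a_i=z_i$. Then each $\tilde p_C(a)=0$ and each $h_i(a)=0$, so all equations hold.

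\textbf{NO case.} Let $\sigma$ satisfy $\Bcal$, and write $\beta=\beta_\sigma$ with $|\beta|<2^{d-3}$. The goal is $\beta=\{\mathbf 0\}$.

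Fix $a\in\beta$. By \Cref{lem:point-isolator-v3} with $\rho=d-3$ and $T=\beta$, there is $q$ of degree at most $d-3$ with $q(a)=1$ and $q=0$ on $\beta\setminus\{a\}$. Each product $q\tilde p_C$ and $qh_i$ is a linear combination of equations of $\Bcal$, so $\sigma$ vanishes on it. This is where we need the reduction of $\sigma$'s linear extension to multilinear degree $\le d$: multilinearization agrees with evaluation at Boolean points, so the sum formula still applies. Therefore
\[
0=\sigma(q\tilde p_C)=\sum_{b\in\beta}q(b)\tilde p_C(b)=\tilde p_C(a),
\]
and likewise $h_i(a)=0$ for all $i$. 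The identity $h_i(a)=0$ holds automatically over $\F_2$, so the real content is in the $\tilde p_C$. Now split on $a_0$:
\begin{itemize}
\item If $a_0=1$, then $\tilde p_C(a)=p_C(a_1,\dots,a_n)$, so $a$ satisfies every clause. This contradicts unsatisfiability.
\item If $a_0=0$, then $\tilde p_C(a)=a_ia_ja_l$ (up to literals, every factor becomes $x$). This alone does not force $a=\mathbf 0$.
\end{itemize}

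\textbf{Main obstacle: the case $a_0=0$.} For $a_0=0$ I must show $a=\mathbf 0$. The paper's remark that the $h_i$ force every variable to vanish when $x_0=0$ cannot be read pointwise over $\F_2$, since $x_i(x_i+x_0)=x_i+x_ix_i=0$ multilinearly. So I would instead use equations $x_i(1+x_0)$ homogenized as $x_i x_0+x_i=x_i(x_0+x_0)$, which is degenerate. The better choice is to add the degree-$2$ constant-free equation $x_i+x_ix_0=0$ directly. It has no constant term, and at a point it equals $a_i(1+a_0)$. Under honest assignments with $a_0=1$ it vanishes, and at $a_0=0$ it forces $a_i=0$.

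With the equations $x^S(x_i+x_0x_i)$ included in $\Bcal$ for $|S|\le d-3$, every $a\in\beta$ is either a satisfying point (impossible) or $\mathbf 0$. Hence $\beta\subseteq\{\mathbf 0\}$.

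Since $|\beta|$ is odd, $\beta$ is nonempty, so $\beta=\{\mathbf 0\}$. This gives the claimed dichotomy.
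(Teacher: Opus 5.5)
Your proof is correct and is essentially the paper's: the same homogenized clause polynomials multiplied by all monomials of degree at most $d-3$, the point isolator of degree $d-3$ applied to $\beta_\sigma$, the case split on $a_0$, and the odd cardinality of $\beta_\sigma$. The one thing to fix is your ``main obstacle.'' Over $\F_2$, $x_i(x_i+x_0)$ multilinearizes to $x_i+x_0x_i=x_i(1+x_0)$, not to $0$; that identity only holds after setting $x_0=1$. So $h_i(a)=a_i(1+a_0)$ is not automatically zero, the equation you add is literally $h_i$ again (hence redundant), and this is exactly how the paper shows that a nonzero isolated point must have $a_0=1$.
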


\begin{proof}
Let the input formula be $\varphi$ with variables $z_1,\ldots,z_n$ and clauses
$C_1,\ldots,C_m$.  We introduce variables $x_0,x_1,\ldots,x_n$.  For a literal
$\ell$, define its false-literal form by
\[
    h_\ell(x)=
    \begin{cases}
        x_0+x_i & \text{if }\ell=z_i,\\
        x_i & \text{if }\ell=\neg z_i.
    \end{cases}
\]
When $x_0=1$, this is $1$ if and only if the literal $\ell$ is false.  For a clause
$C_j=\ell_{j,1}\vee\ell_{j,2}\vee\ell_{j,3}$, define the clause polynomial
\[
    p_j(x):=h_{\ell_{j,1}}(x)h_{\ell_{j,2}}(x)h_{\ell_{j,3}}(x).
\]
For every original variable, define the Booleanity
polynomial
\[
    b_i(x):=x_i(x_i+x_0).
\]
Every $p_j$ and $b_i$ has no constant term,
and their degrees are at most $3$ and $2$, respectively.

The system $\Bcal$ consists of the following equations:
\begin{enumerate}[label=(\arabic*)]
    \item $x^S p_j(x)=0$ for every $j\in[m]$ and every monomial $x^S$,
    including $S=\emptyset$, of degree at most $d-3$; 
    \item $x^S b_i(x)=0$ for every $i\in[n]$ and every monomial $x^S$,
    including $S=\emptyset$, of degree at most $d-2$.
\end{enumerate}
All equations have degree at most $d$ and no constant term.

In the YES case, take a satisfying Boolean assignment $(a_1,\ldots,a_n)$ to
$\varphi$ and set $a_0=1$.  Then each $b_i(a)$ vanishes.  In every clause at
least one literal is true, so at least one of the three corresponding
false-literal forms is zero, and hence every $p_j(a)$ vanishes.  All monomial
multiples in $\Bcal$ therefore vanish at $a$ as well.

Now assume $\varphi$ is unsatisfiable, and let $\sigma$ be a monomial assignment
satisfying all equations in $\Bcal$.  Suppose, for contradiction, that
$\beta_\sigma\ne\{\mathbf 0\}$ and $|\beta_\sigma|<2^{d-3}$.  By the parity
observation above, \(\beta_\sigma\) is nonempty; since it is not
\(\{\mathbf 0\}\), it contains a nonzero point. Choose such a point
\(a\in\beta_\sigma\). We use the following point-isolating lemma to distinguish $a$ from the remaining points in $\beta_\sigma$.

By \Cref{lem:point-isolator-v3}, there exists a polynomial
$q$ of degree at most $d-3$ such that
\[
    q(a)=1\qquad\text{and}\qquad q(b)=0\quad\text{for all }b\in\beta_\sigma\setminus\{a\}.
\]
First we show that $a_0=1$.  For every $i\in[n]$, the polynomial $q(x)b_i(x)$ is
an $\F_2$-linear combination of equations in $\Bcal$, so
\begin{equation}
\label{eq:isolated-booleanity-v3}
    0=\sigma\bigl(q(x)b_i(x)\bigr)
      =\sum_{b\in\beta_\sigma}q(b)b_i(b)
      =a_i(a_i+a_0).
\end{equation}
If $a_0=0$, then \eqref{eq:isolated-booleanity-v3} forces $a_i=0$ for every
$i\in[n]$, contradicting the choice of the nonzero point $a$.  Hence $a_0=1$.

With $a_0=1$, \eqref{eq:isolated-booleanity-v3} says that the point $a$ obeys the intended Booleanity constraints on the original
coordinates, and the values $h_\ell(a)$ are the usual false-literal indicators of
the Boolean assignment $(a_1,\ldots,a_n)$.  Since $\varphi$ is unsatisfiable,
this assignment falsifies some clause, say $C_j$, and therefore $p_j(a)=1$.
Again $q(x)p_j(x)$ is an $\F_2$-linear combination of equations in $\Bcal$, and
hence
\[
    0=\sigma\bigl(q(x)p_j(x)\bigr)
      =\sum_{b\in\beta_\sigma}q(b)p_j(b)
      =q(a)p_j(a)=1,
\]
a contradiction.  Thus either $\beta_\sigma=\{\mathbf 0\}$ or
$|\beta_\sigma|\ge 2^{d-3}$.
\end{proof}

\subsection{Obtaining Khot--Saket superposition soundness}
\label{subsec:x-to-y-superposition}

We now define superposition satisfaction, a notion ruled out by the soundness analysis of Khot--Saket. We are particularly interested in this notion because assignments in superposition satisfaction naturally correspond to the factors in the symmetric rank-one decomposition given by \Cref{lem:symmetric-decomposition-v3}.

\begin{definition}[Superposition satisfaction and aggregate assignment]
\label{def:superposition-v3}
Let
\[
    g(y)=\sum_{U,V} c_{U,V}y_Uy_V+\sum_W d_W y_W
\]
be a polynomial of degree at most $2$ and without constant term over $\F_2$.  We
say that assignments $\tau^{(1)},\dots,\tau^{(t)}$ \emph{satisfy $g(y)=0$ in
superposition} if
\[
    \sum_{i=1}^t g\bigl(\tau^{(i)}\bigr)=0.
\]
Their \emph{aggregate assignment} 
is the coordinate-wise sum
\[
    \tau:=\sum_{i=1}^t \tau^{(i)}.
\]
\end{definition}
For $\beta\subseteq\F_2^{n+1}$ and a polynomial $f$, write
\[
    \chi_\beta(f):=(-1)^{\sum_{a\in\beta} f(a)}.
\]

The following correlation estimate is from \cite[Lemma 3.3]{KhotSaket2014}, which further builds on \cite[Theorem 20]{DG15}. 

\begin{lemma}[Khot--Saket correlation bound; {\cite[Lemma 3.3]{KhotSaket2014}}] 
\label{lem:ks-correlation-bound-v3}
Let $d\ge 8$ be a positive multiple of $4$.  Let \(\sigma\) be a degree-\(d\) monomial assignment, and let
\(\beta_\sigma\) be a minimum-cardinality set representing \(\sigma\)
as in \Cref{lem:monomial-assignments-as-points-v3}. Suppose \(|\beta_\sigma|\ge 2^{d-3}\), and let $\gamma,\alpha\subseteq\F_2^{n+1}$ be arbitrary.
Let $g$ be a uniformly random multilinear polynomial of degree at most $3d/4$,
and let $h$ be a uniformly random multilinear polynomial of degree at most $d/4$
with zero constant term.  Then
\[
    \left|
    \E_{g,h}\bigl[\chi_{\beta_\sigma}(gh)\chi_\gamma(g)\chi_\alpha(h)\bigr]
    \right|
    \le 2^{-2^{d/4-2}+1}.
\]
\end{lemma}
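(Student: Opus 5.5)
The plan is to average over $h$ first, which turns the correlation into a probability of a linear event in $g$, and then to lower-bound the rank of a bilinear form using the minimality of $\beta_\sigma$.

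\textbf{Step 1: averaging over $h$.} Fix $g$. The exponent $\sum_{a\in\beta_\sigma} g(a)h(a)+\sum_{a\in\alpha}h(a)$ is $\F_2$-linear in $h$. Here $h$ is uniform over the space $\mathcal H$ of multilinear polynomials of degree at most $d/4$ with zero constant term. Hence the inner expectation over $h$ equals $1$ if this linear functional vanishes on $\mathcal H$, and $0$ otherwise. Since $|\chi_\gamma(g)|=1$, the absolute value of the whole correlation is at most
\[
  \Pr_g\Bigl[\textstyle\sum_{a\in\beta_\sigma} g(a)h(a)=\sum_{a\in\alpha}h(a)\ \ \forall h\in\mathcal H\Bigr].
\]

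\textbf{Step 2: reduction to a rank bound.} Consider the bilinear form
\[
  B(g,h):=\sum_{a\in\beta_\sigma}g(a)h(a)=\sigma(gh),
\]
defined on $\mathcal G\times\mathcal H$, where $\mathcal G$ is the space of polynomials of degree at most $3d/4$. The identity $B(g,h)=\sigma(gh)$ holds because $\deg(gh)\le d$. The event in Step 1 says that the linear map $g\mapsto B(g,\cdot)\in\mathcal H^*$ hits one fixed functional. That functional is determined by $\alpha$. For a uniform $g$, this has probability either $0$ or $2^{-\operatorname{rank}B}$. So it suffices to prove
\[
  \operatorname{rank}B\ge 2^{d/4-2}-1.
\]

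\textbf{Step 3: the rank lower bound via minimality.} Suppose for contradiction that $\operatorname{rank}B<2^{d/4-2}-1$. Then the right kernel $\mathcal H_0:=\{h\in\mathcal H: B(g,h)=0\ \forall g\in\mathcal G\}$ has codimension less than $2^{d/4-2}-1$ in $\mathcal H$. For each $h\in\mathcal H_0$, the set $\beta_h:=\{a\in\beta_\sigma:h(a)=1\}$ has indicator orthogonal to $\mathrm{RM}(3d/4,n+1)$. By the Reed--Muller dual distance fact used in \Cref{lem:point-isolator-v3}, $\beta_h$ is therefore either empty or of size at least $2^{3d/4+1}$.

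I will then exploit minimality of $\beta_\sigma$ as follows. Any set $\beta'$ whose indicator differs from $1_{\beta_\sigma}$ by an element of $\mathrm{RM}(d,n+1)^{\perp}$ also represents $\sigma$, so we must have $|\beta'|\ge|\beta_\sigma|$.

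\begin{itemize}[leftmargin=2em]
\item \emph{Structural consequence.} Because $\mathcal H_0$ is a large subspace of low-degree polynomials, restricting to $\beta_\sigma$ shows that $\beta_\sigma$ is partitioned by the values of $h\in\mathcal H_0$ into few classes. These are at most $2^{\operatorname{codim}}$ cosets of the common zero pattern.
\item \emph{Making a multiplier.} I then try to produce a polynomial $u=h\cdot g'$ of degree at most $d$ with $h\in\mathcal H_0$, such that $\sum_{a\in\beta_\sigma} u(a)\,\cdot$ (anything of degree $\le d-\deg u$) vanishes.
\item \emph{Building the smaller representative.} Using this, I would replace the part $\beta_h$ of $\beta_\sigma$ by a complementary set in a Reed--Muller dual codeword of minimum weight. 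This yields a representation of $\sigma$ strictly smaller than $\beta_\sigma$, contradicting minimality. The hypothesis $|\beta_\sigma|\ge 2^{d-3}$ guarantees that some $h\in\mathcal H_0$ has $\beta_h\ne\emptyset$, since otherwise all of $\mathcal H_0$ vanishes on $\beta_\sigma$. In that case a counting argument (the evaluation of $\mathcal H$ on $\beta_\sigma$ has dimension at least about $\min(|\beta_\sigma|,2^{d/4})$, in the spirit of \cite[Theorem 20]{DG15}) forces the codimension to be at least $2^{d/4-2}-1$, which is the desired rank bound.
\end{itemize}

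\textbf{Main obstacle.} Steps 1 and 2 are routine Fourier-analytic bookkeeping. The hard part is Step 3: converting the assumptions "minimum cardinality" and $|\beta_\sigma|\ge 2^{d-3}$ into a lower bound on the rank of $\sigma(gh)$ restricted to degrees $3d/4$ and $d/4$. My first attempt is the counting route: show directly that the evaluations of $\mathcal H$ on $\beta_\sigma$ span at least $2^{d/4-2}$ dimensions, and that minimality prevents $\mathcal G$ from being degenerate against them. If that fails, I will fall back on importing the rank statement of \cite[Theorem 20]{DG15} essentially verbatim, as Khot and Saket do.
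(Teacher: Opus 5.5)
\textbf{There is a genuine gap: Step 3 is the whole lemma, and its proposed mechanism fails.}

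Your Steps 1 and 2 are correct, but they only reformulate the statement. Take $\alpha=\gamma=\emptyset$. Then the correlation equals exactly $\Pr_g[B(g,\cdot)\equiv 0]=2^{-\operatorname{rank}B}$. So the lemma is \emph{equivalent} to $\operatorname{rank}B\ge 2^{d/4-2}-1$, and all of its content sits in Step 3, which the proposal does not prove. The paper does not prove this bound either: it imports it from Khot--Saket (Lemma 3.3), who rely on Dinur--Guruswami (Theorem 20). That is exactly your fallback.

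Inside Step 3, the branch where $\mathcal H_0$ vanishes on $\beta_\sigma$ is fine. There $\operatorname{rank}B=\dim\mathcal H|_{\beta_\sigma}\ge 2^{d/4}-2$. To see this, apply the point isolator to $2^{d/4}-2$ nonzero points of $\beta_\sigma$ together with $\mathbf 0$; vanishing at $\mathbf 0$ is the same as having zero constant term. The other branch, where you try to contradict minimality, is where the argument breaks.

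First, $1_{\beta_h}$ is only orthogonal to $\mathrm{RM}(3d/4,n+1)$. To obtain another representative of $\sigma$ you must add an element of $\mathrm{RM}(d,n+1)^{\perp}$, and nothing in your ``multiplier'' bullet produces one. So swapping $\beta_h$ against part of a minimum-weight dual codeword does not preserve $\sigma$.

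More decisively, the situation you want to exclude really occurs. Let $W\subseteq\{x_1=1\}$ be an affine subspace of dimension $3d/4+1$, let $\beta'\subseteq\{x_1=0\}$ be any $2^{d-3}+1$ points, and set $\beta=W\cup\beta'$.
\begin{itemize}
\item $|\beta|$ is odd, $|\beta|\ge 2^{d-3}$, and $|\beta|<2^{d}$ for $d\ge 8$.
\item Since $2^{d}$ is half the minimum distance $2^{d+1}$ of $\mathrm{RM}(d,n+1)^{\perp}$, $\beta$ is the unique minimum-cardinality representative of the monomial assignment it induces.
\item Yet $B(g,x_1)=\sum_{a\in W}g(a)=0$ for every $g$ of degree at most $3d/4<\dim W$.
\end{itemize}
Hence $x_1\in\mathcal H_0$ with $\beta_{x_1}=W\neq\emptyset$, although $\beta$ is minimal.

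Note also that for $2^{d-3}\le|\beta|<2^{d}$, minimality is automatic. In that range the lemma is purely a statement about the rank of $(g,h)\mapsto\sum_{a\in\beta}g(a)h(a)$ for an arbitrary odd-size set $\beta$. A contradiction-with-minimality argument therefore cannot be what drives it.

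What is missing is a quantitative upper bound on $\dim\bigl(\mathcal H|_{\beta_\sigma}\cap(\mathcal G|_{\beta_\sigma})^{\perp}\bigr)$, not a proof that this space is trivial. That bound is precisely the content of the Dinur--Guruswami theorem. Your ``structural consequence'' (at most $2^{\operatorname{codim}}$ classes) is also unjustified as stated, and it would not supply this bound.
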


\begin{theorem}[Constant-free exact-superposition soundness, adapted from \cite{KhotSaket2014}]
\label{thm:homogeneous-superposition-v3}
There is a constant $c>0$ such that the following holds.  Let $t\ge 1$, and let
$d$ be a positive multiple of $4$ with $d\ge \max\{8,c\log(t+1)\}$.  From a
\textup{\textsc{3Sat}} instance on $n$ variables, one can
construct in time $n^{O(d)}$, a system $\Qcal$ of quadratic equations with zero constant term, in variables
\[
    \{y_S:S\in\Ucal_{n,d}\}
\]
with the following properties:
\begin{itemize}
    \item (YES) if the \textup{\textsc{3Sat}} instance is satisfiable, then $\Qcal$
    has a satisfying assignment $y$ with $y_{\{0\}}=1$;
    \item (NO) if the \textup{\textsc{3Sat}} instance is unsatisfiable and
    $\tau^{(1)},\dots,\tau^{(t)}$ 
    satisfy all equations of $\Qcal$ in
    superposition, then their aggregate assignment $\tau:=\sum_{i=1}^t \tau^{(i)}$
    vanishes on every coordinate $y_S$, $S\in\Ucal_{n,d}$.
\end{itemize}
Moreover, $\Qcal$ is obtained by replacing each nonconstant monomial $x^S$ in the
system $\Bcal$ from \Cref{thm:homogeneous-low-weight-v3} with the variable
$y_S$, and by adjoining the constraints
\[
    y_Sy_T=y_{S\cup T}
    \qquad\text{for all }S,T\in\Ucal_{n,d}\text{ with }|S\cup T|\le d.
\]
\end{theorem}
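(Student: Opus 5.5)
Completeness is immediate. Take the YES assignment $a\in\F_2^{n+1}$ with $a_0=1$ from \Cref{thm:homogeneous-low-weight-v3} and set $y_S=a^S$. Then $y_{\{0\}}=1$. Each linearized equation of $\Bcal$ evaluates to the corresponding polynomial at $a$, which is zero. Finally, $y_Sy_T=a^Sa^T=a^{S\cup T}=y_{S\cup T}$ because $a$ is Boolean.

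For soundness I would follow the Khot--Saket averaging argument. Suppose $\tau^{(1)},\dots,\tau^{(t)}$ satisfy $\Qcal$ in superposition, and suppose for contradiction that $\tau$ is not identically zero. View $\tau$ as a monomial assignment $\sigma$ of degree $d$.

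\textbf{Step 1: $\sigma$ satisfies $\Bcal$.} Each equation of $\Bcal$ is linear in the $y$'s, so summing it over the $t$ assignments gives exactly the equation for $\tau$. Hence $\sigma$ satisfies every equation of $\Bcal$.

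\textbf{Step 2: reduce to the large-representation case.} Since $\tau\neq 0$, we have $\beta_\sigma\ne\{\mathbf 0\}$. Indeed, the zero point represents the assignment vanishing on all nonconstant monomials, and by linearity that is the only assignment it represents. So \Cref{thm:homogeneous-low-weight-v3} gives $|\beta_\sigma|\ge 2^{d-3}$.

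\textbf{Step 3: rewrite the product constraints via the individual representations.} Each $\tau^{(i)}$ is itself a monomial assignment, so \Cref{lem:monomial-assignments-as-points-v3} gives a representing set $\beta_i$. Take random $g$ of degree $\le 3d/4$ and $h$ of degree $\le d/4$ with zero constant term. Let $g=g_0+g'$, where $g_0$ is the constant term. The products $y_Sy_T$ with $|S\cup T|\le d$ appear in $\Qcal$, and these constraints are summed in superposition over $i$. Taking the appropriate $\F_2$-linear combination with coefficients from $g$ and $h$ then yields
\[
\sum_i \tau^{(i)}(g)\,\tau^{(i)}(h)=\sigma(gh).
\]
The left side is a sum over $i$ of products of linear functionals of $(g,h)$. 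The right side is $\sum_{a\in\beta_\sigma}g(a)h(a)$.

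\textbf{Step 4: Fourier expansion.} Write $1=\prod_i\frac{1}{2}\bigl(1+\ldots\bigr)$; more concretely, use
\[
(-1)^{uv}=\tfrac12\bigl(1+(-1)^u+(-1)^v-(-1)^{u+v}\bigr)
\]
for each $i$. This expands $(-1)^{\sum_i \tau^{(i)}(g)\tau^{(i)}(h)}$ as a combination of $4^t$ characters of the form $\chi_\gamma(g)\chi_\alpha(h)$, each with coefficient of absolute value $2^{-t}$. Here $\gamma$ and $\alpha$ are symmetric differences of the $\beta_i$.

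The identity from Step 3 says that the product of this expansion with $\chi_{\beta_\sigma}(gh)$ equals $1$ for every $(g,h)$. Taking expectations gives
\[
1\le 4^t\cdot 2^{-t}\cdot 2^{-2^{d/4-2}+1}
\]
by \Cref{lem:ks-correlation-bound-v3}. This fails once $d\ge c\log(t+1)$, which is the desired contradiction.

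\textbf{Main obstacle.} I expect the main difficulty to be Step 3. The quadratic constraints involve $y_Sy_T$ only for nonconstant $S,T$, while $\tau^{(i)}(g)$ includes the constant part via $\tau^{(i)}(1)=1$. I would handle the constant term of $g$ by splitting it off. Since $h$ has no constant term, the constant of $g$ contributes $\sum_i\tau^{(i)}(h)=\sigma(h)$, which is consistent on both sides. This step must be checked carefully so that the constant-free form of $\Qcal$ suffices without any parity assumption on $t$.
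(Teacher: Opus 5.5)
Your proposal is correct and follows the paper's proof essentially step for step. Linearity of the inherited $\Bcal$-equations makes the aggregate $\sigma$ satisfy $\Bcal$, and the dichotomy of \Cref{thm:homogeneous-low-weight-v3} reduces everything to the case $|\beta_\sigma|\ge 2^{d-3}$. The superposed multiplicativity constraints then give $\sigma(gh)=\sum_i\sigma^{(i)}(g)\sigma^{(i)}(h)$, with the constant term of $g$ handled through the coordinate identity $\tau_T=\sum_i\tau^{(i)}_T$ exactly as you suggest. Converting to signs and applying \Cref{lem:ks-correlation-bound-v3} bounds the expectation by $2^t\cdot 2^{-2^{d/4-2}+1}<1$, which is the contradiction.
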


\begin{proof}
Apply \Cref{thm:homogeneous-low-weight-v3} to obtain the constant-free system
$\Bcal$ in the variables $x_0,x_1,\ldots,x_n$.  For every
$S\in\Ucal_{n,d}$ introduce a variable $y_S$, intended to represent the monomial
$x^S$.  Since every equation of $\Bcal$ has zero constant term, no variable
corresponding to $x^\emptyset$ is needed.  The system $\Qcal$ contains:
\begin{enumerate}[label=(\roman*), ref=\roman*]
    \item\label{equation_q:a} for every equation $\sum_S c_Sx^S=0$ in $\Bcal$, the linear equation
    $\sum_S c_Sy_S=0$ obtained by replacing each monomial $x^S$ with $y_S$;
    \item\label{equation_q:b} every multiplicativity constraint $y_Sy_T=y_{S\cup T}$ with
    $S,T\in\Ucal_{n,d}$ and $|S\cup T|\le d$.
\end{enumerate}

The YES case follows directly from \Cref{thm:homogeneous-low-weight-v3}.  If
$a$ is the satisfying assignment there, with $a_0=1$, then setting
$y_S:=a^S$ satisfies all linearized equations from $\Bcal$ and all
multiplicativity constraints.  In particular $y_{\{0\}}=1$.

For the NO case, let $\tau^{(1)},\dots,\tau^{(t)}$ be assignments to the
variables $y_S$ satisfying all equations of $\Qcal$ in superposition.  Write
\[
    \tau^{(i)}_S:=\tau^{(i)}(y_S),
    \qquad
    \tau_S:=\sum_{i=1}^t\tau^{(i)}_S
    \qquad(S\in\Ucal_{n,d}).
\]
Thus $\tau=(\tau_S)_{S\in\Ucal_{n,d}}$ is the aggregate $y$-assignment.
Define the associated degree-$d$ monomial assignments $\sigma$ and
$\sigma^{(i)}$ by
\[
    \sigma(x^S):=\tau_S,
    \qquad
    \sigma^{(i)}(x^S):=\tau^{(i)}_S
    \qquad(S\in\Ucal_{n,d}),
\]
extending them to polynomials by the constant-preserving convention from
\Cref{lem:monomial-assignments-as-points-v3}.  Thus $\tau$ lives on the
$y$-coordinates, while $\sigma$ is the corresponding assignment to the
$x$-monomials.

The main idea here is that, since the equations inherited from \(\mathcal B\) are linear and homogeneous, superposition satisfaction forces the aggregate assignment \(\tau=\sum_i \tau^{(i)}\) to satisfy these equations. Viewing the \(y\)-coordinates of \(\tau\) as a monomial assignment \(\sigma\), \Cref{thm:homogeneous-low-weight-v3} then gives a dichotomy: either \(\beta_\sigma=\{\mathbf 0\}\), or \(\beta_\sigma\) is large. The latter possibility is ruled out by converting the superposition satisfaction of product equations \(y_Sy_T=y_{S\cup T}\) into a low-degree correlation identity, and applying the Khot--Saket correlation bound (\Cref{lem:ks-correlation-bound-v3}). 

The equations inherited from $\Bcal$ become linear after the
monomial-to-variable replacement in (\ref{equation_q:a}), 
and they have no constant term.
Therefore superposition satisfaction implies that $\tau$ satisfies every such
linearized equation. Equivalently, $\sigma$ satisfies every equation of
$\Bcal$.  By \Cref{thm:homogeneous-low-weight-v3}, either
$\beta_\sigma=\{\mathbf 0\}$ or $|\beta_\sigma|\ge 2^{d-3}$.  If
$\beta_\sigma=\{\mathbf 0\}$, then for every $S\in\Ucal_{n,d}$,
\[
    \tau_S=\sigma(x^S)=x^S(\mathbf 0)=0,
\]
which is exactly the claimed vanishing of every coordinate $y_S$.

It remains to rule out the case $|\beta_\sigma|\ge 2^{d-3}$.  For each $i$, let
$\beta_i:=\beta_{\sigma^{(i)}}$.  The multiplicativity constraint
$y_Sy_T=y_{S\cup T}$, satisfied in superposition, gives
\begin{equation}
\label{eq:monomial-superposition-multiplicativity-v3}
    \sigma(x^{S\cup T})=\tau_{S\cup T}
    =\sum_{i=1}^t \tau^{(i)}_S\tau^{(i)}_T
    =\sum_{i=1}^t \sigma^{(i)}(x^S)\sigma^{(i)}(x^T)
\end{equation}
whenever $S,T\in\Ucal_{n,d}$ and $|S\cup T|\le d$.  Using the same
constant-preserving extension, \eqref{eq:monomial-superposition-multiplicativity-v3}
implies, by checking monomial pairs and extending $\F_2$-bilinearly, that
\begin{equation}
\label{eq:polynomial-superposition-multiplicativity-v3}
    \sigma(gh)=\sum_{i=1}^t \sigma^{(i)}(g)\sigma^{(i)}(h)
\end{equation}
for every polynomial $g$ of degree at most $3d/4$ and every polynomial $h$ of
degree at most $d/4$ with zero constant term.  Indeed, since $h$ has zero constant term, it suffices to check $h=x^T$ and $g=1$ or $g=x^S$.  The case $g=x^S$ is exactly \eqref{eq:monomial-superposition-multiplicativity-v3}; the case $g=1$ is the coordinate identity $\tau_T=\sum_i\tau^{(i)}_T$.  The degree bounds ensure $|S\cup T|\le d$ in the first case.

To analyze \eqref{eq:polynomial-superposition-multiplicativity-v3} using Fourier analysis, we need to switch from $\F_2$ to the real values in $\{-1, 1\}$. 
For bits $u,v\in\F_2$, if $a=(-1)^u$ and $b=(-1)^v$, then
\[
    (-1)^{uv}=a\wedge b:=\frac{1+a+b-ab}{2}.
\]
Taking signs in \eqref{eq:polynomial-superposition-multiplicativity-v3} and
using \Cref{lem:monomial-assignments-as-points-v3}, we obtain, for every such
pair $g,h$,
\begin{equation}
\label{eq:ks-sign-identity-v3}
    \chi_{\beta_\sigma}(gh)
    \prod_{i=1}^t\left(\chi_{\beta_i}(g)\wedge \chi_{\beta_i}(h)\right)=1.
\end{equation}
Now choose $g$ and $h$ uniformly at random from the two polynomial spaces above.
The expectation of the left-hand side of \eqref{eq:ks-sign-identity-v3} is $1$.
On the other hand, expanding each factor
\[
    a\wedge b=\frac{1+a+b-ab}{2}
\]
expresses the same expectation as a sum of at most $4^t$ terms, each with
coefficient of absolute value at most $2^{-t}$ and each of the form
\[
    \E_{g,h}\bigl[\chi_{\beta_\sigma}(gh)\chi_\gamma(g)\chi_\alpha(h)\bigr]
\]
for some subsets $\gamma,\alpha\subseteq\F_2^{n+1}$.  By
\Cref{lem:ks-correlation-bound-v3} above, the absolute value of the whole expectation
is at most
\[
    2^t\cdot 2^{-2^{d/4-2}+1}.
\]
Choosing the absolute constant $c$ sufficiently large makes this quantity
strictly smaller than $1$ for every $t\ge 1$, contradicting
\eqref{eq:ks-sign-identity-v3}.  Hence the case
$|\beta_\sigma|\ge 2^{d-3}$ cannot occur, and the aggregate assignment must
vanish on every coordinate $y_S$.
\end{proof}

\subsection{Linearizing the superposition instance}
\label{subsec:linearized-superposition-instance}

Fix a degree parameter $d$, and let $\Qcal$ be the quadratic system from
\Cref{thm:homogeneous-superposition-v3}.  Its variables are
$\{y_S:S\in\Ucal_{n,d}\}$.  We define a linear subspace of matrices with rows and columns indexed by
$\Ucal_{n,d}$ by imposing two kinds of linear constraints.
\begin{enumerate}[label=(\arabic*)]
    \item \emph{Equal-union constraints:}
    \begin{equation}
    \label{eq:equal-union-constraints-v3}
        A_{S,T}=A_{S',T'}
        \qquad\text{whenever }S\cup T=S'\cup T'.
    \end{equation}
    \item \emph{Equation constraints:} for every equation of $\Qcal$ in the form
\begin{equation}
\label{eq:generic-qcal-equation-v3}
    \sum_{S,T\in\Ucal_{n,d}} c_{S,T}y_Sy_T
    +\sum_{R\in\Ucal_{n,d}} b_Ry_R=0,
\end{equation}
where $c_{S,T},b_R\in\F_2$, impose
    \begin{equation}
    \label{eq:qcal-moment-equation-constraints-v3}
        \sum_{S,T\in\Ucal_{n,d}} c_{S,T}A_{S,T}
        +\sum_{R\in\Ucal_{n,d}} b_RA_{R,R}=0.
    \end{equation}
\end{enumerate}
Let $\Lcal_d(\Qcal)$ be the set of matrices satisfying
\eqref{eq:equal-union-constraints-v3} and
\eqref{eq:qcal-moment-equation-constraints-v3}. This is a linear subspace of
$\F_2^{N\times N}$, where
\[
    N=|\Ucal_{n,d}|=\sum_{j=1}^d\binom{n+1}{j}.
\]
Furthermore, $\Lcal_d(\Qcal)$ only contains symmetric matrices due to the constraints $A_{S,T}=A_{T,S}$ in \eqref{eq:equal-union-constraints-v3}. 

\begin{restatable}[Decomposition into symmetric rank-one matrices,~\cite{KhotSaket2014}]{lemma}{symmetricdecomposition}

\label{lem:symmetric-decomposition-v3}
Let $A\in \F_2^{N\times N}$ be symmetric of rank $k$.  Then there is an integer
$t$ with $0\le t\le \lfloor 3k/2\rfloor$ and vectors
$u^{(1)},\dots,u^{(t)}\in\F_2^N$ such that
\[
    A=\sum_{i=1}^{t} u^{(i)}u^{(i)\top}.
\]
\end{restatable}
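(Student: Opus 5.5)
The plan is to use the classification of symmetric bilinear forms over $\F_2$ and then count rank-one pieces block by block. A symmetric matrix $A\in\F_2^{N\times N}$ of rank $k$ is congruent to a block-diagonal form. That is, there is an invertible $P$ with
\[
    A=P^\top D P,
\]
where $D$ is built from $k$ nonzero rows/columns. These come in two kinds: if $A$ is \emph{alternating} (zero diagonal), $D$ consists of $k/2$ hyperbolic blocks $\begin{pmatrix}0&1\\1&0\end{pmatrix}$; otherwise $D$ is diagonal with $k$ ones. I would cite or reprove this standard fact, proceeding by induction on $k$.

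The induction step splits on the diagonal. If some diagonal entry satisfies $A_{ii}\neq 0$, there is a vector $v=e_i$ with $v^\top Av=1$. We split off the span of $v$: its orthogonal complement is a complement, and the rank drops by one. If all diagonal entries vanish but $A\neq0$, we pick $v,w$ with $v^\top Aw=1$ and split off a hyperbolic plane. This yields
\[
    A=\sum_j \ell_j\ell_j^\top+\sum_m \bigl(p_mq_m^\top+q_mp_m^\top\bigr),
\]
where the $\ell_j,p_m,q_m$ are rows of $P$ and $(\#\ell)+2(\#\text{pairs})=k$.

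Each piece is then written as a sum of symmetric rank-ones. A diagonal piece $\ell\ell^\top$ already costs $1$. For a hyperbolic piece over $\F_2$ we use the identity
\[
    pq^\top+qp^\top=(p+q)(p+q)^\top+pp^\top+qq^\top,
\]
which costs $3$ terms per $2$ units of rank. Hence $t\le \#\ell+3\cdot\#\text{pairs}\le \lfloor 3k/2\rfloor$, since $\#\ell+2\#\text{pairs}=k$ gives $\#\ell+3\#\text{pairs}=k+\#\text{pairs}\le k+\lfloor k/2\rfloor$. The case $k=0$ gives $t=0$.

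The main point to get right is the splitting step. In the diagonal case, we must verify that $\F_2^N=\langle v\rangle\oplus v^{\perp_A}$ and that the restriction of $A$ to $v^{\perp_A}$ has rank $k-1$. In the alternating case, the same must hold for the plane $\langle v,w\rangle$, whose Gram matrix $\begin{pmatrix}0&1\\1&0\end{pmatrix}$ is invertible. Both checks are routine linear algebra, because the Gram matrix on the split-off subspace is invertible.

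To state the result without change of basis, it is cleaner to phrase the induction directly as
\[
    A=A'+\ell\ell^\top
\]
with $\ell=Av/(v^\top Av)$, and to check that $\operatorname{rank}(A')=k-1$. Similarly, $A=A'+(\text{hyperbolic piece})$ with $p=Av$, $q=Aw$, giving $\operatorname{rank}(A')=k-2$. In the first case, $A'v=0$ and $\operatorname{im}A'\subseteq\operatorname{im}A$ with $\ell\in\operatorname{im}A\setminus\operatorname{im}A'$. The second case works in the same way. Rank subadditivity then gives the exact drop.
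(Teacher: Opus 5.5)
Your proposal is correct and is essentially the paper's proof. Both argue by induction on the rank: you peel off $a_ia_i^\top$ (your $\ell=Ae_i$) when some $A_{ii}=1$, and $a_ia_j^\top+a_ja_i^\top$ (your $p=Ae_i$, $q=Ae_j$) when the diagonal vanishes, then use the same $\F_2$ identity to turn each hyperbolic piece into three symmetric rank-one terms. The congruence/classification framing is extra, since your count $\#\ell+3\#\text{pairs}\le k+\lfloor k/2\rfloor$ already handles the mixed decomposition your induction produces, and the asserted rank drop follows because $A'$ is symmetric with $A'v=0$ (so $\operatorname{im}A'$ is $v$-orthogonal while $v^\top\ell=1$), which is the paper's zero-row/column argument.
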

We include a proof of \Cref{lem:symmetric-decomposition-v3} in \Cref{sec:proof-khot-saket}.

\begin{lemma}[Low rank gives superposition assignments]
\label{lem:low-rank-superposition-v3}
Let $k\ge1$.  Let $A\in\Lcal_d(\Qcal)$ have rank at most $k$, and let
$t\ge\lfloor 3k/2\rfloor$.  Then there exist
$u^{(1)},\dots,u^{(t)}\in\F_2^{N}$ such that
\[
    A=\sum_{i=1}^t u^{(i)}u^{(i)\top},
\]
and the assignments $y_S\mapsto u^{(i)}_S$ satisfy every equation of $\Qcal$ in
superposition.
\end{lemma}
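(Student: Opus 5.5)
Since every matrix in $\Lcal_d(\Qcal)$ is symmetric (the equal-union constraints contain $A_{S,T}=A_{T,S}$), I would start by applying \Cref{lem:symmetric-decomposition-v3} to $A$. This gives $A=\sum_{i=1}^{t_0}u^{(i)}u^{(i)\top}$ with $t_0\le\lfloor 3\operatorname{rank}(A)/2\rfloor\le\lfloor 3k/2\rfloor\le t$. To reach exactly $t$ summands, I would pad with zero vectors $u^{(t_0+1)}=\dots=u^{(t)}=0$. These do not change the sum, and since every equation of $\Qcal$ has zero constant term, the zero assignment contributes $g(0)=0$ to each superposition sum. This padding is exactly where constant-freeness matters, as noted in the introduction's footnote.

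Next I would verify superposition satisfaction. Write $\tau^{(i)}_S:=u^{(i)}_S$ and take an arbitrary equation $g(y)=\sum_{S,T}c_{S,T}y_Sy_T+\sum_R b_Ry_R=0$ of $\Qcal$. By bilinearity,
\[
    \sum_{i=1}^t g\bigl(\tau^{(i)}\bigr)
    =\sum_{S,T}c_{S,T}\sum_i u^{(i)}_Su^{(i)}_T+\sum_R b_R\sum_i u^{(i)}_R .
\]
The first inner sum is exactly $A_{S,T}$. For the linear part I would use that over $\F_2$ we have $u^2=u$, so $\sum_i u^{(i)}_R=\sum_i (u^{(i)}_R)^2=A_{R,R}$. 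The whole expression therefore equals $\sum_{S,T}c_{S,T}A_{S,T}+\sum_R b_RA_{R,R}$. This is precisely the left-hand side of the linearized constraint \eqref{eq:qcal-moment-equation-constraints-v3}, which vanishes because $A\in\Lcal_d(\Qcal)$.

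I expect no real obstacle here; the only points needing care are the two just used. The padding relies on the zero constant term, and replacing the linear terms $y_R$ by diagonal entries relies on $x^2=x$ in $\F_2$. The nontrivial content, namely producing a \emph{symmetric} decomposition with at most $\lfloor 3k/2\rfloor$ terms, is already supplied by \Cref{lem:symmetric-decomposition-v3}.
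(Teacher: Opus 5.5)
Your proposal is correct and follows the paper's proof essentially verbatim. The paper likewise applies \Cref{lem:symmetric-decomposition-v3}, pads with zero vectors to reach exactly $t$ summands, and substitutes the decomposition into the linearized constraint using $(u^{(i)}_R)^2=u^{(i)}_R$ over $\F_2$. Your separate padding remark is redundant, though harmless: your bilinear computation already covers the zero vectors, and it is the zero-constant-term form of each equation that makes that computation valid for any number $t$ of summands.
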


\begin{proof}
Since $A\in\Lcal_d(\Qcal)$ satisfies the equal-union constraints, it is
symmetric.  By \Cref{lem:symmetric-decomposition-v3}, $A$ is a sum of at most
$\lfloor 3k/2\rfloor\le t$ symmetric rank-one matrices.  Padding with zero vectors
if necessary, write
\[
    A=\sum_{i=1}^{t}u^{(i)}u^{(i)\top}.
\]

Fix an equation of $\Qcal$ written as in
\eqref{eq:generic-qcal-equation-v3}.  Since $A$ satisfies the corresponding
equation constraint \eqref{eq:qcal-moment-equation-constraints-v3}, we have
\[
    0=
    \sum_{S,T\in\Ucal_{n,d}} c_{S,T}A_{S,T}
    +\sum_{R\in\Ucal_{n,d}} b_RA_{R,R}.
\]
Substituting the rank-one decomposition and using $(u^{(i)}_R)^2=u^{(i)}_R$
over $\F_2$ gives
\[
\begin{aligned}
    0
    &=\sum_{i=1}^t
      \left(
      \sum_{S,T\in\Ucal_{n,d}} c_{S,T}u^{(i)}_Su^{(i)}_T
      +\sum_{R\in\Ucal_{n,d}} b_R(u^{(i)}_R)^2
      \right) \\
    &=\sum_{i=1}^t
      \left(
      \sum_{S,T\in\Ucal_{n,d}} c_{S,T}u^{(i)}_Su^{(i)}_T
      +\sum_{R\in\Ucal_{n,d}} b_Ru^{(i)}_R
      \right).
\end{aligned}
\]
The last expression is the superposition sum of this equation evaluated on the
assignments $y_S\mapsto u^{(i)}_S$.  Hence the assignments satisfy every equation
of $\Qcal$ in superposition.
\end{proof}

\begin{lemma}[Zero aggregate assignment forces vanishing]
\label{lem:zero-resultant-rank-v3}
Let $A \in\Lcal_d(\Qcal)$, $A \neq 0$, admit a decomposition
\[
    A=\sum_{i=1}^t u^{(i)}u^{(i)\top}
\]
such that its aggregate assignment is zero:
\[
    \sum_{i=1}^t u^{(i)}_U=0
    \qquad\text{for every }U\in\Ucal_{n,d}.
\]
Then 
\[
    \operatorname{rank}(A) \ge \binom{d+1}{\floorbra{(d+1)/2}} \ . 
\]
\end{lemma}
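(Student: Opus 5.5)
The plan has three steps: kill the diagonal, propagate to every entry with a small union via equal-union, then apply the informal structural lemma at a minimal surviving union.

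\emph{Step 1: the diagonal vanishes.} For every $S\in\Ucal_{n,d}$ we compute
\[
A_{S,S}=\sum_{i=1}^t (u^{(i)}_S)^2=\sum_{i=1}^t u^{(i)}_S=0,
\]
using $x^2=x$ over $\F_2$ and the zero-aggregate hypothesis.

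\emph{Step 2: entries with small union vanish.} Take any $S,T\in\Ucal_{n,d}$ with $|S\cup T|\le d$. Then $R:=S\cup T\in\Ucal_{n,d}$ is itself a valid row and column label. The equal-union constraint \eqref{eq:equal-union-constraints-v3} gives $A_{S,T}=A_{R,R}=0$. So every nonzero entry $A_{S,T}$ has $|S\cup T|\ge d+1$.

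\emph{Step 3: choose a minimal nonzero union.} Since $A\ne 0$, pick a nonzero entry $A_{S,T}$ with $m:=|S\cup T|$ minimal, and set $R:=S\cup T$. We have $d+1\le m\le 2d$. Let $y_R$ denote the common value of all entries whose labels union to $R$. Minimality means that every entry $A_{F,G}$ with $|F\cup G|<m$ is zero.

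\emph{Step 4: the submatrix argument.} Split into $p:=\floorbra{m/2}\le d$ and $m-p=\ceilbra{m/2}\le d$. Since $m\ge d+1\ge 2$, we get $p\ge1$, so all $p$- and $(m-p)$-subsets of $R$ are nonempty and have size at most $d$; hence they lie in $\Ucal_{n,d}$. Consider the submatrix $M$ with rows the $p$-subsets $F\subseteq R$ and columns the $(m-p)$-subsets $G\subseteq R$. Its entry $M_{F,G}=A_{F,G}$ has $F\cup G\subseteq R$.
\begin{itemize}
\item If $F\cup G=R$, which is equivalent to $G=R\setminus F$ by the size count, the entry equals $y_R\ne0$.
\item Otherwise $|F\cup G|<m$, and the entry is zero by minimality.
\end{itemize}
Thus $M$ is $y_R$ times a permutation matrix of size $\binom{m}{p}$. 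This gives $\operatorname{rank}(A)\ge\binom{m}{\floorbra{m/2}}$.

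\emph{Step 5: monotonicity.} The central binomial coefficient $\binom{m}{\floorbra{m/2}}$ is nondecreasing in $m$. Since $m\ge d+1$, we conclude
\[
\operatorname{rank}(A)\ge \binom{d+1}{\floorbra{(d+1)/2}}.
\]

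The argument is short. The main point to check is that both halves of the split fit in $\Ucal_{n,d}$, meaning they are nonempty and of size at most $d$. This holds because $d+1\le m\le 2d$. The homogenizing coordinate $0$ causes no trouble, since $\Ucal_{n,d}$ allows it.
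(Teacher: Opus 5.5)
Your proof is correct and follows the paper's argument essentially step for step. You kill the diagonal via $u^2=u$ and the zero aggregate, and you propagate this to every entry with $|S\cup T|\le d$ via equal-union. You then extract a permutation submatrix on the $\lfloor m/2\rfloor$- versus $\lceil m/2\rceil$-subsets of a minimal nonzero union $R$, and finish by monotonicity of the central binomial coefficient. Your explicit check that $\lfloor m/2\rfloor\ge 1$, so that all labels are nonempty and lie in $\Ucal_{n,d}$, is a detail the paper leaves implicit.
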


\begin{proof}
The zero-aggregation condition gives
\[
    A_{U,U}=\sum_{i=1}^t (u^{(i)}_U)^2
           =\sum_{i=1}^t u^{(i)}_U=0,
    \qquad \forall U\in\Ucal_{n,d}.
\]
If $U,V\in\Ucal_{n,d}$ and $|U\cup V|\le d$, then $U\cup V\in\Ucal_{n,d}$, and
the equal-union constraints give
\begin{equation}
\label{eq:small-union-zero-v3}
    A_{U,V}=A_{U\cup V,U\cup V}=0.
\end{equation}

Suppose $A\ne0$, and choose $U,V$ with $A_{U,V}=1$ minimizing
$s:=|U\cup V|$.  By \eqref{eq:small-union-zero-v3}, we have $d<s\le2d$.  Put
$R:=U\cup V$.  For every pair $U',V'\in\Ucal_{n,d}$ with $U'\cup V'=R$, the
equal-union constraints give $A_{U',V'}=A_{U,V}=1$.

Let
\[
    \Fcal:=\{F\subseteq R:|F|=\floorbra{s/2}\},
    \qquad
    \mathcal{G}:=\{G\subseteq R:|G|=\ceilbra{s/2}\}.
\]
Because $d<s\le2d$, every set in $\Fcal\cup\mathcal{G}$ lies in
$\Ucal_{n,d}$.  In the submatrix $A|_{\Fcal,\mathcal{G}}$, the entry indexed by
$(F,G)$ is $1$ exactly when $G=R\setminus F$: in that case $F\cup G=R$, while
all other pairs have union a proper subset of $R$ and hence have entry $0$ by
the minimality of $s$.  Thus $A|_{\Fcal,\mathcal{G}}$ is a permutation matrix,
so
\[
    \operatorname{rank}(A)
    \ge \binom{s}{\floorbra{s/2}}
    \ge \binom{d+1}{\floorbra{(d+1)/2}} \ . \qedhere
\]
\end{proof}

\begin{theorem}[Rank gap over $\F_2$]
\label{thm:rank-gap-subspace-f2-v3}
For every integer $k\ge1$, there is a deterministic
$n^{O(\log k)}$-time reduction from a \textup{\textsc{3Sat}} instance on
$n$ variables to a linear subspace
\[
    \Lcal\subseteq\F_2^{N\times N},
    \qquad N=n^{O(\log k)},
\]
such that satisfiable instances yield a nonzero rank-$1$ matrix in $\Lcal$, and
unsatisfiable instances yield no nonzero matrix in $\Lcal$ of rank at most $k$.
\end{theorem}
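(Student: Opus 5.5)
The plan is to combine \Cref{thm:homogeneous-superposition-v3}, \Cref{lem:low-rank-superposition-v3} and \Cref{lem:zero-resultant-rank-v3}, with parameters chosen so that the final rank bound beats $k$.

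\textbf{Choice of parameters.} Set $t:=\lfloor 3k/2\rfloor$, and choose $d$ as the smallest positive multiple of $4$ satisfying three conditions: $d\ge \max\{8, c\log(t+1)\}$, where $c$ is the constant from \Cref{thm:homogeneous-superposition-v3}; and $\binom{d+1}{\lfloor (d+1)/2\rfloor}>k$. The central binomial coefficient is at least $2^{d+1}/(d+2)$, so the last condition already holds for $d=O(\log k)$. Hence $d=\Theta(\log k)$, assuming $k\ge 1$ and treating small $k$ by the constant lower bound $8$.

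\textbf{Construction and completeness.} Run \Cref{thm:homogeneous-superposition-v3} with these $t,d$ to get $\Qcal$ in time $n^{O(d)}=n^{O(\log k)}$. Output $\Lcal:=\Lcal_d(\Qcal)$, whose dimension is $N=|\Ucal_{n,d}|\le (n+2)^d=n^{O(\log k)}$; the constraints number polynomially in $N$. For completeness, take the YES assignment $y$ with $y_{\{0\}}=1$, so $y_S=a^S$ for a Boolean point $a$, and set $A:=yy^\top$. Then $A_{S,T}=a^Sa^T=a^{S\cup T}$, so the equal-union constraints hold. Each equation constraint \eqref{eq:qcal-moment-equation-constraints-v3} evaluates to the corresponding equation of $\Qcal$ at $y$, using $A_{R,R}=y_R^2=y_R$ over $\F_2$, and is therefore zero. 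Since $A_{\{0\},\{0\}}=1$, the matrix $A$ is nonzero and has rank $1$.

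\textbf{Soundness.} Suppose the formula is unsatisfiable and $0\ne A\in\Lcal$ has rank at most $k$.
\begin{enumerate}
    \item By \Cref{lem:low-rank-superposition-v3}, write $A=\sum_{i=1}^t u^{(i)}u^{(i)\top}$, where the $u^{(i)}$ satisfy $\Qcal$ in superposition. The choice $t=\lfloor 3k/2\rfloor$ is exactly the value the lemma allows.
    \item The NO case of \Cref{thm:homogeneous-superposition-v3} then gives zero aggregate assignment.
    \item \Cref{lem:zero-resultant-rank-v3} yields $\operatorname{rank}(A)\ge\binom{d+1}{\lfloor(d+1)/2\rfloor}>k$, which is a contradiction.
\end{enumerate}

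\textbf{Main obstacle.} The one subtlety is that \Cref{thm:homogeneous-superposition-v3} is stated for a fixed $t$, while the number of summands in the decomposition varies with $A$. This is harmless because \Cref{lem:low-rank-superposition-v3} pads with zero vectors up to any $t\ge\lfloor 3k/2\rfloor$, so we can always reach exactly $t$. The remaining work is the routine check that $d=O(\log k)$ meets all three constraints simultaneously.
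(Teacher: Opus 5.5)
Your proposal is correct and follows the paper's proof essentially step for step. It uses the same parameters: $t=\lfloor 3k/2\rfloor$, and $d$ a multiple of $4$ with $d\ge\max\{8,c\log(t+1)\}$ and $\binom{d+1}{\lfloor (d+1)/2\rfloor}>k$. It also uses the same honest rank-one moment matrix for completeness and the same chain \Cref{lem:low-rank-superposition-v3} $\Rightarrow$ \Cref{thm:homogeneous-superposition-v3} $\Rightarrow$ \Cref{lem:zero-resultant-rank-v3} for soundness, including the padding to exactly $t$ summands.
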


\begin{proof}
Set
\[
    t:=\left\lfloor\frac{3k}{2}\right\rfloor,
\]
and choose $d$ to be a positive multiple of $4$ such that 
\[
    d\ge \max\{8,c\log(t+1)\} 
    \qquad\text{and}\qquad
    \binom{d+1}{\floorbra{(d+1)/2}}>k,
\]
where $c$ is the constant from \Cref{thm:homogeneous-superposition-v3}.  Then
$d=O(\log k)$.

Apply \Cref{thm:homogeneous-superposition-v3} with this $d$ and output
$\Lcal:=\Lcal_d(\Qcal)$.  The construction time is
$n^{O(d)}=n^{O(\log k)}$, and
\[
    N=|\Ucal_{n,d}|=\sum_{j=1}^d\binom{n+1}{j}=n^{O(d)}=n^{O(\log k)}.
\]

In the YES case, take a satisfying Boolean assignment to the original formula,
set $a_0=1$, and define $y_S:=a^S$ for every $S\in\Ucal_{n,2d}$.  Then
$H_d(y)=v_d(a)v_d(a)^\top$ is nonzero of rank $1$.  Since every equation of
$\Qcal$ vanishes at $(a^S)_{S\in\Ucal_{n,d}}$, the matrix $H_d(y)$ satisfies both
the equal-union constraints and the equation constraints.  Hence
$H_d(y)\in\Lcal$.

In the NO case, suppose $A\in\Lcal$ is nonzero and $\operatorname{rank}(A) \le k$.  By \Cref{lem:low-rank-superposition-v3}, there
are exactly $t$ assignments whose rank-one sum is $A$ and which satisfy all
equations of $\Qcal$ in superposition.  By
\Cref{thm:homogeneous-superposition-v3}, their aggregate assignment is zero.
Then \Cref{lem:zero-resultant-rank-v3} and the choice of $d$ imply that $
    \operatorname{rank}(A) > k$. 
\end{proof}

\subsection{Extending the hardness from $\mathbb F_2$ to $\mathbb F_{2^r}$}
\label{subsec:f2tof2r}

The following descent lemma is the black-box field-extension step.  It loses a
factor of $r$ in the rank parameter, which is why the base-field gap below is
run with parameter $rk$.

\begin{lemma}[Rank descent]
\label{lem:rank-descent-extension-v3}
Let $\mathbb K=\F_{2^r}$, and let $\Lcal\subseteq\F_2^{N\times N}$ be a linear
subspace.  If $A\in \Lcal\otimes_{\F_2}\mathbb K$ is nonzero and
$\operatorname{rank}_{\mathbb K}(A)\le k$, then there exists a nonzero
$B\in\Lcal$ with
\[
    \operatorname{rank}_{\F_2}(B)\le rk.
\]
\end{lemma}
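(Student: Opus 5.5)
Fix an $\F_2$-basis $1=\omega_1,\omega_2,\ldots,\omega_r$ of $\mathbb K$. Since $\Lcal\otimes_{\F_2}\mathbb K$ is spanned by $\Lcal$, every $A$ in it can be written uniquely as
\[
    A=\sum_{j=1}^r \omega_j B_j,\qquad B_j\in\Lcal .
\]
The plan is to take $B$ to be one of these $\F_2$-components, or more generally $\lambda(A)$ for a suitable $\F_2$-linear functional $\lambda:\mathbb K\to\F_2$ applied entrywise. Write $\lambda_j$ for the dual basis functionals, so that $\lambda_j(A)=B_j$. Then $\lambda_j(A)$ lies in $\Lcal$, because $\Lcal$ is defined over $\F_2$ and $\lambda_j$ acts entrywise by $\F_2$-linear maps. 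Since $A\ne0$, some $B_j$ is nonzero, and we take $B:=B_j$ for such a $j$.

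It remains to bound the rank of $B$. Write a rank factorization $A=UV^\top$ over $\mathbb K$ with $U,V\in\mathbb K^{N\times k}$. Expand each entry of $U$ in the basis as $U=\sum_a \omega_a U_a$, and similarly $V=\sum_b\omega_b V_b$, with $U_a,V_b\in\F_2^{N\times k}$. Then
\[
    A=\sum_{a,b}\omega_a\omega_b\,U_aV_b^\top,
    \qquad
    B=\lambda_j(A)=\sum_{a,b}\lambda_j(\omega_a\omega_b)\,U_aV_b^\top .
\]
This expression has rank at most $r^2k$, which is too weak. To get $rk$, regroup the sum as $B=\sum_a U_a\bigl(\sum_b \lambda_j(\omega_a\omega_b)V_b\bigr)^\top$. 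The column space of $B$ is then contained in the $\F_2$-span of the columns of $U_1,\ldots,U_r$, a space of dimension at most $rk$. Hence $\operatorname{rank}_{\F_2}(B)\le rk$.

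An equivalent view is the following. The $\mathbb K$-column space of $A$ has $\mathbb K$-dimension at most $k$, so its $\F_2$-dimension is at most $rk$. Every column of $\lambda_j(A)$ is the image under the map $\lambda_j$ of a column of $A$, so the columns of $B$ lie in the image of a space of $\F_2$-dimension at most $rk$.

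The step needing the most care is the rank bound: one must avoid the naive $r^2k$ estimate by bounding only the column space, not both factors. The remaining points are routine: checking that applying $\lambda_j$ entrywise preserves membership in $\Lcal$, and that $A\ne0$ guarantees some $B_j\ne0$.
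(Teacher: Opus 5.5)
Your proposal is correct and follows essentially the same route as the paper. Both apply an $\F_2$-linear functional entrywise (your dual-basis $\lambda_j$ is one particular choice of the paper's $\phi$), note that this preserves membership in $\Lcal$, and bound the rank by expanding a rank-$k$ factorization of $A$ in an $\F_2$-basis of $\mathbb K$. The only cosmetic difference is how the expansion is grouped: the paper groups it by rank-one term, writing $\phi(p_jq_j^\top)=P_jM_\phi Q_j^\top$, each of rank at most $r$, whereas you group it by basis index and place the column space inside the span of the $rk$ columns of $U_1,\ldots,U_r$.
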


\begin{proof}
Choose an $\F_2$-linear map $\phi:\mathbb K\to\F_2$ such that the entrywise
matrix $\phi(A)$ is nonzero; this is possible because $A$ has a nonzero entry.
Set $B:=\phi(A)$.  If
$A=\sum_j \alpha_j A_j$ with $\alpha_j\in\mathbb K$ and $A_j\in\Lcal$, then
\[
    B=\sum_j \phi(\alpha_j)A_j,
\]
so $B\in\Lcal$.

It remains to bound the rank of $B$.  Write
$A=\sum_{j=1}^k p_jq_j^\top$ with $p_j,q_j\in\mathbb K^N$.  Fix an
$\F_2$-basis $\theta_1,\ldots,\theta_r$ of $\mathbb K$.  For each $j$, let
$P_j,Q_j\in\F_2^{N\times r}$ be the coordinate matrices of $p_j$ and $q_j$ in
this basis, and let $M_\phi\in\F_2^{r\times r}$ be given by
\[
    (M_\phi)_{ab}:=\phi(\theta_a\theta_b).
\]
Then
\[
    \phi(p_jq_j^\top)=P_jM_\phi Q_j^\top,
\]
which has rank at most $r$ over $\F_2$.  Therefore
\[
    \operatorname{rank}_{\F_2}(B)
    \le \sum_{j=1}^k \operatorname{rank}_{\F_2}\bigl(\phi(p_jq_j^\top)\bigr)
    \le rk . \qedhere
\]
\end{proof}

\begin{proof}[Proof of \Cref{thm:rank-gap-subspace-v3}]
Let $\mathbb K=\F_{2^r}$ and set $k_0:=rk$.  Apply
\Cref{thm:rank-gap-subspace-f2-v3} with rank parameter $k_0$ to obtain
$\Mcal\subseteq\F_2^{N\times N}$, and output
\[
    \Lcal:=\Mcal\otimes_{\F_2}\mathbb K\subseteq\mathbb K^{N\times N},
\]
equivalently the same homogeneous linear equations interpreted over
$\mathbb K$.

Completeness is preserved under field extension: a nonzero rank-$1$ matrix in
$\Mcal$ remains a nonzero rank-$1$ matrix in $\Lcal$.  For soundness, suppose
that the input formula is unsatisfiable and that $A\in\Lcal$ is nonzero with
$\operatorname{rank}_{\mathbb K}(A)\le k$.  By
\Cref{lem:rank-descent-extension-v3}, there is a nonzero
$B\in\Mcal$ with $\operatorname{rank}_{\F_2}(B)\le rk=k_0$, contradicting the
soundness of $\Mcal$.

The running time and dimension are
\[
    n^{O(\log k_0)}=n^{O(\log(rk))},
    \qquad
    N=n^{O(\log k_0)}=n^{O(\log(rk))}.
\]
For fixed extension degree $r$, this is $n^{O(\log k)}$ time and
$N=n^{O(\log k)}$.
\end{proof}

The inapproximability statement \Cref{thm:intro-main-inapproximability} is a canonical corollary of \Cref{thm:rank-gap-subspace-v3}; we defer the proof to \Cref{app:inapproximability-corollaries}.

\begin{remark}[Toward arbitrary finite fields]
\label{rem:toward-arbitrary-fq-strong}
It is natural to ask whether the $n^{O(\log k)}$ rank-gap reduction of this section can be obtained over an arbitrary fixed finite field $\F_q$. We plan to include this extension in a later version of this paper.
The main technical step to achieve this is replacing the binary low-degree long code used in the Dinur--Guruswami and Khot--Saket frameworks \cite{DG15,KhotSaket2014} with a suitable $q$-ary analogue.
In particular, carrying out the superposition-soundness argument over $\F_q$ requires extending the relevant low-error Reed--Muller testing and, crucially, generalizing the correlation bound of \Cref{lem:ks-correlation-bound-v3} to arbitrary finite fields.
\end{remark}

%% file: sec/moment.tex
\section{An $n^{O(k)}$ Construction over any $\mathbb F_q$}
\label{sec:moment-matrix-construction}

In this section, we start with the canonical NP-hardness of Boolean \textup{\textsc{QuadEq}}, and prove \Cref{thm:elementary-moment-rank-gap} by a direct moment-matrix construction. 

\begin{lemma}[Boolean \textup{\textsc{QuadEq}} hardness, \cite{FY79,GJ79}]
\label{lem:quadeq-source-elementary}
Let $\F_q$ be any finite field.  Given squarefree quadratic polynomials
$f_1,\ldots,f_m$ over $\F_q$ in variables $x_1,\ldots,x_n$, it is NP-hard to
decide whether there exists
$a\in\{0,1\}^{n}\subseteq\F_q^{n}$ such that
$f_1(a)=\cdots=f_m(a)=0$.
Equivalently, each input polynomial may be written as
\[
    f_\ell(x)=\sum_{U\in\Vcal_{n,2}} c_{\ell,U}x^U,
    \qquad c_{\ell,U}\in\F_q,
\]
where the coordinate indexed by $\emptyset$ is the constant term.
\end{lemma}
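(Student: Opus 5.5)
The plan is a standard reduction from \textsc{3Sat}, which is NP-complete. It is cited to \cite{FY79,GJ79}, but a self-contained argument is short.

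Given a \textsc{3Sat} formula with variables $z_1,\ldots,z_n$ and clauses $C_1,\ldots,C_m$, work over $\F_q$ with Boolean variables $x_i$ meant to equal $z_i$. For a literal $\ell$, set its false-indicator $h_\ell(x)=1-x_i$ if $\ell=z_i$ and $h_\ell(x)=x_i$ if $\ell=\neg z_i$. On $\{0,1\}$-points this is $1$ exactly when $\ell$ is false. A clause $\ell_1\vee\ell_2\vee\ell_3$ then corresponds to the cubic equation $h_{\ell_1}h_{\ell_2}h_{\ell_3}=0$.

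To bring the degree down to $2$, introduce for each clause $j$ a fresh variable $w_j$ together with two equations:
\[
    w_j-h_{\ell_{j,1}}(x)\,h_{\ell_{j,2}}(x)=0,
    \qquad
    w_j\,h_{\ell_{j,3}}(x)=0.
\]
Both polynomials are squarefree of degree at most $2$, and each expands into the form $\sum_{U\in\Vcal_{n',2}}c_{\ell,U}x^U$ over the enlarged variable set; in particular the $x_i$-terms never produce a square. Booleanity needs no separate equations, because the statement quantifies only over points $a\in\{0,1\}^{n'}$.

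For the YES direction, a satisfying assignment $z$ gives the Boolean point $x=z$ with $w_j:=h_{\ell_{j,1}}(x)h_{\ell_{j,2}}(x)\in\{0,1\}$. Every equation then holds: the first by the definition of $w_j$, and the second because either $w_j=0$ or both of $\ell_{j,1},\ell_{j,2}$ are false. In the latter case $\ell_{j,3}$ must be true, so $h_{\ell_{j,3}}=0$.

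For the NO direction, suppose some Boolean point solves the system. On Boolean points the two equations force $h_{\ell_{j,1}}h_{\ell_{j,2}}h_{\ell_{j,3}}=0$, which says every clause has a true literal. So the point, read as an assignment to the $z_i$, satisfies the formula.

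The reduction is clearly polynomial time and uses only the field elements $0,\pm1$, so it works over any $\F_q$. The only step needing care is that every polynomial stays squarefree, and this holds because each product combines three distinct variables; I assume, as usual, that each clause has three distinct variables. This is not a real obstacle.
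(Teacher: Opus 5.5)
Your reduction is correct. The paper gives no proof of this lemma and simply imports it from \cite{FY79,GJ79}, so your argument is a self-contained substitute for a citation rather than an alternative to an in-paper proof.

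It uses the same false-literal indicators $h_\ell$ that the paper employs in \Cref{thm:homogeneous-low-weight-v3}. There they are homogenized as $x_0+x_i$, and the cubic clause polynomial $h_{\ell_{j,1}}h_{\ell_{j,2}}h_{\ell_{j,3}}$ is kept as is, because that system may have degree $d\ge 4$. Here degree at most $2$ is required, and splitting each cubic with a fresh $w_j$ via $w_j-h_{\ell_{j,1}}h_{\ell_{j,2}}=0$ and $w_jh_{\ell_{j,3}}=0$ is exactly the right fix. Both directions hold over every $\F_q$, because all values involved lie in $\{0,1\}$ and a field has no zero divisors. You are also right that no Booleanity equations are needed, since the problem only asks about points of $\{0,1\}^{n}$.

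The assumption of three distinct variables per clause can be dropped. Multilinearizing ($x_i^2\mapsto x_i$) turns any polynomial into a squarefree one with the same Boolean zero set.
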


\begin{theorem}
\label{thm:elementary-moment-rank-gap}
Fix a finite field $\F_q$. For every integer $k\ge 1$, there is a deterministic
$n^{O(k)}$-time reduction from a Boolean
\textup{\textsc{QuadEq}} instance with $n$ variables over
$\F_q$ to a linear subspace
\[
    \Lcal\subseteq\F_q^{N\times N},
    \qquad
    N=n^{O(k)},
\]
such that:
\begin{itemize}
    \item (YES) if the Boolean \textup{\textsc{QuadEq}} instance is satisfiable, then
    $\Lcal$ contains a nonzero matrix of rank $1$;
    \item (NO) if the Boolean \textup{\textsc{QuadEq}} instance is unsatisfiable, then
    $\Lcal$ contains no nonzero matrix of rank at most $k$.
\end{itemize}
\end{theorem}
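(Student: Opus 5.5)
We will follow the direct pseudo-moment strategy of \Cref{subsec:rank-proof-overview}. Set $d:=k+1$ and index rows and columns by $\Vcal_{n,d}$, so $N=|\Vcal_{n,d}|=n^{O(k)}$. The subspace $\Lcal$ consists of all $A$ satisfying two families of constraints, both phrased through the pseudo-moment vector $y\in\F_q^{\Vcal_{n,2d}}$ with $A=H_d(y)$:
\begin{itemize}
\item the equal-union constraints $A_{S,T}=A_{S',T'}$ whenever $S\cup T=S'\cup T'$;
\item the localizing constraints $\sum_{U\in\Vcal_{n,2}}c_{\ell,U}\,y_{U\cup W}=0$ for every $\ell$ and every $W\in\Vcal_{n,2d-2}$.
\end{itemize}
Completeness is routine. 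A Boolean solution $a$ gives $y_R=a^R$, so $H_d(y)=v_d(a)v_d(a)^\top$. This matrix has rank one and is nonzero because $y_\emptyset=1$. Each localizing constraint equals $a^W f_\ell(a)=0$.

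For soundness, suppose $H_d(y)\ne0$ has rank at most $k$. Write $r_e:=\operatorname{rank}H_e(y)$ for $0\le e\le d$. Since $H_e(y)$ is a principal submatrix of $H_{e+1}(y)$, the sequence $r_e$ is nondecreasing.

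Let $R$ be a minimum-size set with $y_R\ne0$, and put $s:=|R|$ and $e_0:=\lceil s/2\rceil$. Then $H_e(y)=0$ for $e<e_0$. By the argument of \Cref{lemma:informal_rank}, applied to the submatrix with rows indexed by the $\lfloor s/2\rfloor$-subsets of $R$ and columns indexed by the $\lceil s/2\rceil$-subsets, we get $r_{e_0}\ge\binom{s}{\lfloor s/2\rfloor}\ge e_0$ (this includes $s=0$, where $r_0=1$). If $r_e$ increased strictly on every step from $e_0$ to $d$, we would get $r_d\ge e_0+(d-e_0)=d=k+1>k$, a contradiction. Hence there is a flat level $e_0\le e<d$ with $r_e=r_{e+1}>0$.

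At the flat level we run a flat-extension argument. For $S\in\Vcal_{n,e+1}$, let $c_S\in\F_q^{\Vcal_{n,e+1}}$ be the column of $H_{e+1}(y)$ indexed by $S$. Because $H_{e+1}(y)$ is symmetric and its leading block $H_e(y)$ already has full rank $r_{e+1}$, a standard block argument gives two facts:
\begin{itemize}
\item every $c_S$ with $|S|=e+1$ lies in $C:=\operatorname{span}\{c_S:S\in\Vcal_{n,e}\}$;
\item any relation $\sum_S\lambda_S c_S=0$ among these columns holds as soon as it holds on the rows $\Vcal_{n,e}$.
\end{itemize}
Define $T_i$ on $C$ by $T_i c_S:=c_{S\cup\{i\}}$ for $S\in\Vcal_{n,e}$.

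The main obstacle is showing that $T_i$ is well defined. Suppose $\sum_S\lambda_S c_S=0$. The row-$W$ entry of $\sum_S\lambda_S c_{S\cup\{i\}}$, for $W\in\Vcal_{n,e}$, equals the row-$(W\cup\{i\})$ entry of $\sum_S\lambda_S c_S$, because of the equal-union constraints. That entry is $0$, since $|W\cup\{i\}|\le e+1$. By the second fact above, the shifted combination therefore vanishes on all rows, so $T_i$ is well defined, and by the first fact it maps $C$ into $C$.

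The equal-union constraints then give $T_i^2=T_i$ and $T_iT_j=T_jT_i$. For the localizing identities, take $S,W\in\Vcal_{n,e}$. The row-$W$ entry of $f_\ell(T)c_S$ is $\sum_U c_{\ell,U}\,y_{U\cup S\cup W}$. Here $|S\cup W|\le 2e\le 2d-2$, so this entry is $0$ by the localizing constraints. Hence $f_\ell(T)c_S$ vanishes on the rows $\Vcal_{n,e}$, so $f_\ell(T)c_S=0$ by the second fact, and therefore $f_\ell(T)=0$ on $C$. The polynomial $f_\ell(T)$ is unambiguous because $f_\ell$ is squarefree and the $T_i$ commute.

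Commuting idempotents over $\F_q$ are simultaneously diagonalizable with eigenvalues in $\{0,1\}$. Since $C\ne0$, they have a common eigenvector $v$ with $T_iv=a_iv$ and $a\in\{0,1\}^n$. Then $0=f_\ell(T)v=f_\ell(a)v$, so $f_\ell(a)=0$ for all $\ell$, contradicting unsatisfiability.

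The running time is $n^{O(d)}=n^{O(k)}$, since there are $n^{O(k)}$ matrix entries, equal-union constraints and localizing constraints to write down.
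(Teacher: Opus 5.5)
Your proposal is correct and follows the paper's proof essentially step for step. It uses the same equal-union and localizing constraints, a nonzero flat level forced by the permutation submatrix at a minimum-size nonzero moment, multiplication operators $T_i$ on the flat-level column space, and decoding through a common eigenvector of commuting idempotents.

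The differences are cosmetic. You take $d=k+1$ so that the weaker bound $r_{e_0}\ge e_0$ suffices, whereas the paper takes $d=k$ and proves $r_s\ge s+1$ with a separate $|S|=1$ case. You also define $T_i$ on a spanning set with a well-definedness check rather than on a basis.

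You should spell out one step you state tersely. The properties $T_i^2=T_i$ and $T_iT_j=T_jT_i$, and the degree-two terms of $f_\ell(T)c_S$, all rely on $(T_ic_A)_W=y_{W\cup A\cup\{i\}}$ for $|A|=e+1$ and $W\in\Vcal_{n,e}$. This follows from the same row-shift computation combined with your second block fact, exactly as in \Cref{lem:elementary-flatness-lifts-multiplication}.
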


Throughout this section all vector spaces, matrices, and polynomials are over a
fixed finite field $\F_q$.  We use the Boolean monomial and pseudo-moment
conventions from \Cref{sec:preliminaries}; in particular, $\Vcal_{n,d}$ includes
the empty set.

\subsection{The pseudo-moment subspace}
\label{subsec:pseudo-moment-subspace}

Fix an integer $d\ge 1$.  Following the notation of
\Cref{sec:preliminaries}, a degree-$2d$ pseudo-moment vector is
$y=(y_S)_{S\in\Vcal_{n,2d}}$, and its associated pseudo-moment matrix is
$H_d(y)$.

We define a linear subspace of matrices with rows and columns indexed by
$\Vcal_{n,d}$ by imposing two kinds of linear constraints.
\begin{enumerate}[label=(\arabic*)]
    \item \emph{Equal-union constraints:}
    \begin{equation}
    \label{eq:equal-union-constraints-elementary}
        A_{S,T}=A_{S',T'}
        \qquad\text{whenever }S\cup T=S'\cup T'.
    \end{equation}
    After imposing these constraints, for every $R\in\Vcal_{n,2d}$ we write
    $y_R$ for the common value $A_{S,T}$ over all pairs
    $S,T\in\Vcal_{n,d}$ with $S\cup T=R$. For convenience, we describe the next set of constraints in terms of $y$.

    \item \emph{Localizing constraints:}
    For each \textsc{QuadEq} equation
    \[
        f_\ell(x)=\sum_{U\in\Vcal_{n,2}} c_{\ell,U}x^U,
        \qquad c_{\ell,U}\in\F_q,
    \]
    we impose the linear constraints
    \begin{equation}
    \label{eq:moment-localizing-elementary}
        \sum_{U\in\Vcal_{n,2}} c_{\ell,U}\,y_{U\cup W}=0
        \qquad
        \text{for every }\ell\in[m]\text{ and every }W\in\Vcal_{n,2d-2}.
    \end{equation}
    These are the moment
    analogues of the equations $x^W f_\ell(x)=0$: after multiplying $f_\ell$ by
    any squarefree monomial of degree at most $2d-2$, the corresponding linear
    combination of pseudo-moments must vanish.
\end{enumerate}
Let
$\Lcal_d(f_1,\ldots,f_m)$ denote the set of matrices satisfying the constraints above.  This is a linear subspace of
$\F_q^{N \times N}$, where
\[
    N=|\Vcal_{n,d}|=\sum_{j=0}^{d}\binom{n}{j}=n^{O(d)}.
\]

For completeness, if $a=(a_1,\ldots,a_n)\in\{0,1\}^{n}$ satisfies all equations
$f_\ell(a)=0$, set $y_S:=a^S$ for every $S\in\Vcal_{n,2d}$.  Then
$H_d(y)=v_d(a)v_d(a)^\top$ has rank $1$ and is nonzero because the coordinate
indexed by $\emptyset$ is $1$.  Moreover, for every $W\in\Vcal_{n,2d-2}$,
\[
    \sum_{U\in\Vcal_{n,2}} c_{\ell,U}y_{U\cup W}
    =a^W f_\ell(a)=0,
\]
so $H_d(y)\in\Lcal_d(f_1,\ldots,f_m)$.

\smallskip
We now begin the soundness proof.  It suffices to show that every nonzero matrix in $\Lcal_d(f_1,\ldots,f_m)$ with rank at most $d$ yields an actual Boolean solution of the source
instance.  Let $y$ be the unique pseudo-moment vector associated with such a
matrix $A$, so $A=H_d(y)$ and $y$ satisfies
\eqref{eq:moment-localizing-elementary}.  
For a genuine Boolean point $a$, multiplying the label of a monomial
column by $x_i$ multiplies that column by the scalar $a_i$; equivalently, the
column space is invariant under multiplication by each variable.  We will find two consecutive 
levels where low rank forces a similar form of invariance. 

For $0\le e\le d$, let
\[
    H_e(y):=\bigl(y_{S\cup T}\bigr)_{S,T\in\Vcal_{n,e}},
    \qquad
    r_e:=\operatorname{rank} H_e(y),
\]
and let $C_e$ denote the column space of $H_e(y)$.

The soundness argument has four steps.
\begin{enumerate}[label=\textup{(\arabic*)},leftmargin=2em]
    \item Low rank forces a nonzero \emph{flat level}, meaning a level $e$ with
    $r_e=r_{e+1}>0$.
    \item At such a flat level, multiplying a basis column by any variable
    $x_i$ creates no new column direction.  This defines a linear map
    $T_i:C_e\to C_e$, interpreted as multiplication by $x_i$.
    \item Because the relevant column relations are certified one level up (a.k.a inside $C_{e+1}$), these maps satisfy the Boolean multiplication rules
    $T_i^2=T_i$ and $T_iT_j=T_jT_i$, and the localizing equations become the
    operator identities $f_\ell(T)=0$. 
    \item The commuting projections $T_i$ have a common eigenvector.  Their
    eigenvalues form a Boolean point $a\in\{0,1\}^{n}\subseteq\F_q^{n}$, and
    the identities $f_\ell(T)=0$ force $f_\ell(a)=0$.
\end{enumerate}

\subsection{Finding a nonzero flat level}
\label{subsec:nonzero-flat-level}

Note that the ranks $r_0,r_1,\ldots,r_d$ are non-decreasing, and the final
rank is at most $d$.  Monotonicity alone would still allow the ranks to be zero
for many initial levels and then grow by one at each remaining level.  The key
point is that a nonzero pseudo-moment cannot start growing this slowly.  If
$S$ is a minimum-size set with $y_S\ne0$, then the same permutation-submatrix
idea as in \Cref{lem:zero-resultant-rank-v3} already forces, at level
$s=\ceilbra{|S|/2}$, the lower bound $r_s\ge s+1$ (indeed
$r_s\ge 2^{\Omega(s)}$ except for the smallest cases).  If every subsequent
positive step were strict, then
\[
    r_d\ge r_s+(d-s)\ge (s+1)+(d-s)=d+1,
\]
contradicting $r_d\le d$.  Thus some nonzero plateau $r_e=r_{e+1}$ must occur.

\begin{lemma}
\label{lem:elementary-flat-level}
Let $d\ge 1$.  Suppose $H_d(y)$ is nonzero and
$\operatorname{rank}H_d(y)\le d$.  Then there exists
$e\in\{0,1,\ldots,d-1\}$ such that
\[
    r_e=r_{e+1}>0.
\]
\end{lemma}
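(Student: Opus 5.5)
The plan is to combine monotonicity of the ranks $r_e$ with a permutation-submatrix lower bound at the first level where some pseudo-moment becomes visible. Here $r_0,\ldots,r_d$ are the ranks defined above.

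\textbf{Monotonicity.} For $e\le e'$, the matrix $H_e(y)$ is the principal submatrix of $H_{e'}(y)$ indexed by $\Vcal_{n,e}\subseteq\Vcal_{n,e'}$. Hence $r_0\le r_1\le\cdots\le r_d\le d$.

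\textbf{Choosing the first nonzero level.} Since $H_d(y)\ne0$, some $y_R$ with $R\in\Vcal_{n,2d}$ is nonzero. Choose such an $R$ of minimum size $\rho:=|R|$, and set $s:=\ceilbra{\rho/2}\le d$.

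\textbf{Lower bound on $r_s$.} Consider the submatrix of $H_s(y)$ with rows indexed by the subsets $F\subseteq R$ with $|F|=\floorbra{\rho/2}$ and columns indexed by the subsets $G\subseteq R$ with $|G|=\ceilbra{\rho/2}$. Both index families lie in $\Vcal_{n,s}$. The $(F,G)$ entry is $y_{F\cup G}$. If $F\cup G\subsetneq R$, this entry is zero by minimality of $\rho$. If $F\cup G=R$, then $G=R\setminus F$ and the entry equals $y_R\ne0$. So the submatrix is $y_R$ times a permutation matrix, and
\[
    r_s\ge\binom{\rho}{\floorbra{\rho/2}}.
\]
Separately, for every $e<s$ and all $S,T\in\Vcal_{n,e}$ we have $|S\cup T|\le 2e<\rho$. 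So $H_e(y)=0$ and $r_e=0$ for $e<s$.

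\textbf{The key inequality.} The main step is to check $\binom{\rho}{\floorbra{\rho/2}}\ge s+1=\ceilbra{\rho/2}+1$ for all $\rho\ge0$.
\begin{itemize}
\item $\rho=0$ gives $1\ge1$.
\item $\rho=1$ gives $1$ versus $2$, which fails.
\item $\rho=2$ gives $2\ge2$.
\item $\rho=3$ gives $3\ge3$.
\item For $\rho\ge4$ the bound holds easily, since $\binom{\rho}{\floorbra{\rho/2}}\ge\rho\ge\ceilbra{\rho/2}+1$.
\end{itemize}
The case $\rho=1$ needs a separate argument. There $s=1$, and we have $y_\emptyset=0$ together with $y_{\{i\}}\ne0$ for some $i$. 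The rows and columns indexed by $\emptyset$ and $\{i\}$ give the $2\times2$ submatrix
\[
    \begin{pmatrix}0 & y_{\{i\}}\\ y_{\{i\}} & y_{\{i\}}\end{pmatrix},
\]
which is nonsingular. So $r_1\ge2=s+1$ in this case too. Thus in all cases $r_s\ge s+1>0$.

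\textbf{Conclusion.} Suppose there were no $e\in\{0,\ldots,d-1\}$ with $r_e=r_{e+1}>0$. Then for every $e\ge s$, since $r_e\ge r_s>0$, each step is strict: $r_{e+1}\ge r_e+1$. Therefore
\[
    r_d\ge r_s+(d-s)\ge d+1,
\]
contradicting $r_d\le d$. Note that $s\le d$ holds because $\rho\le 2d$. When $s=d$ the same inequality reads $r_d\ge d+1$, which is immediately a contradiction.

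The only delicate point is the small-$\rho$ case analysis, in particular $\rho=1$, where the permutation-matrix bound alone is too weak.
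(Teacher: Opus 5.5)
Your proof is correct and follows essentially the same route as the paper: choose a minimum-size $R$ with $y_R\ne0$, obtain $r_s\ge s+1$ for $s=\ceilbra{|R|/2}$ from the permutation submatrix (using the same $2\times 2$ submatrix for $|R|=1$), and then use monotonicity of the $r_e$ to force a nonzero flat level. The differences are cosmetic: you split the inequality $\binom{\rho}{\floorbra{\rho/2}}\ge s+1$ into more cases than the paper's uniform treatment of $h\ge 2$, and you add the observation that $r_e=0$ for $e<s$, which is harmless but not needed.
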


\begin{proof}
Choose a set $S\in\Vcal_{n,2d}$ of minimum size such that $y_S\ne 0$, and put
\[
    \lambda:=y_S,
    \qquad h:=|S|,
    \qquad s:=\ceilbra{h/2}.
\]
Since $h\le 2d$, we have $s\le d$.  The first step is the following rank lower
bound forced by the first nonzero moment:
\begin{equation}
\label{eq:rs-large-from-first-nonzero-moment}
    r_s\ge s+1.
\end{equation}

If $h=0$, then $s=0$ and $y_\emptyset=\lambda\ne 0$, so
$r_s=r_0=1=s+1$.  If $h=1$, say $S=\{i\}$, then the minimality of $S$ gives
$y_\emptyset=0$ and $y_S=\lambda$.  The submatrix of $H_1(y)$ with rows and
columns indexed by $\emptyset,S$ is
\[
    \begin{pmatrix}
        y_\emptyset & y_S\\
        y_S & y_S
    \end{pmatrix}
    =
    \begin{pmatrix}
        0&\lambda\\
        \lambda&\lambda
    \end{pmatrix},
\]
which has rank $2=s+1$.

It remains to consider $h\ge 2$.  Use the subsets $A\subseteq S$ of size
$\floorbra{h/2}$ as row labels and the subsets $B\subseteq S$ of size
$\ceilbra{h/2}=s$ as column labels.  The resulting submatrix of $H_s(y)$ has
entries $y_{A\cup B}$.  By the minimality of $S$, this entry is $\lambda$ exactly
when $A\cup B=S$, and is $0$ otherwise.  For each row $A$, there is a unique such
column, namely $B=S\setminus A$.  Thus this submatrix is $\lambda$ times a
permutation matrix of size
\[
    \binom{h}{\floorbra{h/2}}.
\]
Therefore
\[
    r_s\ge \binom{h}{\floorbra{h/2}}
        \ge h
        \ge \ceilbra{h/2}+1
        =s+1,
\]
where the middle inequality uses $h\ge 2$.  This proves
\eqref{eq:rs-large-from-first-nonzero-moment} in all cases.

The matrices $H_0(y),H_1(y),\ldots,H_d(y)$ are nested principal submatrices, so
the ranks $r_0,r_1,\ldots,r_d$ are nondecreasing.  Suppose, for contradiction,
that no $e\in\{0,1,\ldots,d-1\}$ satisfies $r_e=r_{e+1}>0$.  Since $r_s>0$ and
the ranks are nondecreasing, the ranks strictly
increase at every such step.  Thus
\[
    r_d\ge r_s+(d-s)\ge (s+1)+(d-s)=d+1,
\]
contradicting $r_d=\operatorname{rank}H_d(y)\le d$.  Therefore a nonzero flat
level exists.
\end{proof}

\subsection{Multiplication operators and localizing identities}
\label{subsec:multiplication-operators-localizing-identities}

For $0\le e\le d$ and a set $A\subseteq[n]$ with $|A|\le 2d-e$, define the
truncated column
\[
    \mathbf c_e(A):=\bigl(y_{R\cup A}\bigr)_{R\in\Vcal_{n,e}}
    \in \F_q^{\Vcal_{n,e}}.
\]
If $A\in\Vcal_{n,e}$, then $\mathbf c_e(A)$ is exactly the column of $H_e(y)$
indexed by $A$.  If $|A|=t>e$, it is not literally a column of $H_e(y)$, but it is still the degree-$e$ truncation of the column indexed by $A$, in a larger moment matrix $H_t(y)$.

We now explain how the flat level is used. 
At a flat level $r_e=r_{e+1}$, any basis of the column space $C_e$ remains a basis after lifting the columns to level $e+1$.  Hence multiplying the column label by $x_i$ gives a well-defined operator $T_i:C_e\to C_e$. Furthermore, this extra-level lift certifies that
\[
    T_i\mathbf c_e(A)=\mathbf c_e(A\cup\{i\})
\]
for every $A\in\Vcal_{n,e+1}$. 
This allows us to multiply the column label by a second variable consistently, yielding $T_i^2=T_i$, $T_iT_j=T_jT_i$, and ultimately translating the localizing constraints into the operator identities $f_\ell(T)=0$. 

\begin{lemma}
\label{lem:elementary-flatness-lifts-multiplication}
Assume $r_e=r_{e+1}$ and $e\le d-1$.  Let
$\Ical\subseteq\Vcal_{n,e}$ be such that the vectors
$\{\mathbf c_e(B):B\in\Ical\}$ form a basis for $C_e$.  Then, for every
$i\in[n]$, there is a linear map $T_i:C_e\to C_e$ satisfying
\[
    T_i\mathbf c_e(B)=\mathbf c_e(B\cup\{i\})
    \qquad\text{for every }B\in\Ical.
\]
Moreover, for every $A\in\Vcal_{n,e+1}$ and every
$i\in[n]$, the vector $\mathbf c_e(A)$ lies in $C_e$ and
\begin{equation}
\label{eq:Ti-all-truncated-columns}
    T_i\mathbf c_e(A)=\mathbf c_e(A\cup\{i\}).
\end{equation}
\end{lemma}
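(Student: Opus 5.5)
The argument rests on one observation: a flat level lets us transfer linear relations among columns from level $e$ up to level $e+1$.

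\emph{Step 1: lifting relations.} Let $\pi:\F_q^{\Vcal_{n,e+1}}\to\F_q^{\Vcal_{n,e}}$ be the restriction to rows indexed by $\Vcal_{n,e}$. For $A\in\Vcal_{n,e+1}$ we have $\pi(\mathbf c_{e+1}(A))=\mathbf c_e(A)$. Since $H_e(y)$ is a principal submatrix of $H_{e+1}(y)$, the columns $\{\mathbf c_{e+1}(B):B\in\Ical\}$ are linearly independent, because their images under $\pi$ are. There are $r_e=r_{e+1}$ of them, so they form a basis of $C_{e+1}$.

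Now take any $A\in\Vcal_{n,e+1}$. Write $\mathbf c_{e+1}(A)=\sum_{B\in\Ical}\mu_B\mathbf c_{e+1}(B)$ and apply $\pi$. This gives $\mathbf c_e(A)\in C_e$, which is the first claim of \eqref{eq:Ti-all-truncated-columns}.

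The key point is the converse. Suppose a linear relation $\sum_{A}\lambda_A\mathbf c_e(A)=0$ holds with all $A\in\Vcal_{n,e+1}$. Then the lifted vector $v:=\sum_A\lambda_A\mathbf c_{e+1}(A)$ lies in $C_{e+1}$ and $\pi(v)=0$. Writing $v$ in the basis $\{\mathbf c_{e+1}(B)\}_{B\in\Ical}$ and using that these stay independent under $\pi$, we get $v=0$. So every relation among truncated level-$e$ columns with labels in $\Vcal_{n,e+1}$ also holds among the full level-$(e+1)$ columns.

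\emph{Step 2: defining $T_i$.} Define $T_i$ on the basis by $T_i\mathbf c_e(B):=\mathbf c_e(B\cup\{i\})$. This makes sense because $|B\cup\{i\}|\le e+1$, so by Step 1 the vector $\mathbf c_e(B\cup\{i\})$ lies in $C_e$. Extend $T_i$ linearly to all of $C_e$.

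\emph{Step 3: the identity for every $A$.} Fix $A\in\Vcal_{n,e+1}$. By Step 1, $\mathbf c_{e+1}(A)=\sum_{B\in\Ical}\mu_B\mathbf c_{e+1}(B)$ at the full level $e+1$. Read this coordinatewise, at each row $R'\in\Vcal_{n,e+1}$:
\[
y_{R'\cup A}=\sum_{B\in\Ical}\mu_B\,y_{R'\cup B}.
\]
For any $R\in\Vcal_{n,e}$, take $R'=R\cup\{i\}$, which has size at most $e+1$. The identity becomes
\[
y_{R\cup A\cup\{i\}}=\sum_{B\in\Ical}\mu_B\,y_{R\cup B\cup\{i\}},
\]
which says exactly that $\mathbf c_e(A\cup\{i\})=\sum_{B\in\Ical}\mu_B\,\mathbf c_e(B\cup\{i\})$. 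The right-hand side is $T_i\bigl(\sum_{B}\mu_B\mathbf c_e(B)\bigr)=T_i\mathbf c_e(A)$, which proves \eqref{eq:Ti-all-truncated-columns}.

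All sets involved have size at most $2e+2\le 2d$, so every $y$ used is defined. The only case needing care is $i\in R$; there $R'=R$, and the identity still holds and reads the same way.

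\emph{Main obstacle.} The crux is the row-shifting trick in Step 3. The relation $\mathbf c_{e+1}(A)=\sum_B\mu_B\mathbf c_{e+1}(B)$ must be certified at level $e+1$ rather than merely at level $e$: only then can the row labels be multiplied by $x_i$ while staying inside the matrix. This is exactly where the hypothesis $r_e=r_{e+1}$ is used, and it also guarantees that $T_i$ is well defined on the truncated columns.
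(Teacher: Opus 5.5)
Your proof is correct and follows essentially the same route as the paper: lift the basis $\{\mathbf c_e(B):B\in\Ical\}$ to a basis of level $e+1$ using $r_e=r_{e+1}$, define $T_i$ on that basis, and verify \eqref{eq:Ti-all-truncated-columns} by reading the level-$(e+1)$ expansion of $\mathbf c_{e+1}(A)$ at the shifted rows $R\cup\{i\}$. The ``converse'' relation-lifting observation in your Step 1 is correct but is not actually used in Step 3, so it can be dropped.
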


\begin{proof}
First observe that the longer vectors
$\{\mathbf c_{e+1}(B):B\in\Ical\}$ are linearly independent: any linear relation
among them restricts, on the rows indexed by $\Vcal_{n,e}$, to the same linear
relation among the basis vectors $\{\mathbf c_e(B):B\in\Ical\}$.  Since
$r_e=r_{e+1}$, these longer vectors form a basis for the column space of
$H_{e+1}(y)$.

For every $i\in[n]$ and every $B\in\Ical$, the vector
$\mathbf c_{e+1}(B\cup\{i\})$ is a column of $H_{e+1}(y)$, and hence has a unique
expansion in the basis $\{\mathbf c_{e+1}(P):P\in\Ical\}$.  Restricting that
expansion to rows of degree at most $e$ shows that
$\mathbf c_e(B\cup\{i\})\in C_e$.
We may therefore define $T_i$ on the basis by
\[
    T_i\mathbf c_e(B):=\mathbf c_e(B\cup\{i\})
    \qquad(B\in\Ical),
\]
and extend linearly to all of $C_e$.

Now fix $A\in\Vcal_{n,e+1}$.  We would like to argue that
$T_i\mathbf c_e(A)=\mathbf c_e(A\cup\{i\})$.
Since $\mathbf c_{e+1}(A)$ is a column of $H_{e+1}(y)$, there are unique
coefficients $\lambda_{A,B}\in\F_q$ such that
\[
    \mathbf c_{e+1}(A)
    =\sum_{B\in\Ical}\lambda_{A,B}\mathbf c_{e+1}(B).
\]
Restricting to rows in $\Vcal_{n,e}$ gives
\[
    \mathbf c_e(A)
    =\sum_{B\in\Ical}\lambda_{A,B}\mathbf c_e(B),
\]
so $\mathbf c_e(A)\in C_e$.  By linearity and the definition of $T_i$ on the
basis,
\[
    T_i\mathbf c_e(A)
    =\sum_{B\in\Ical}\lambda_{A,B}T_i\mathbf c_e(B)
    =\sum_{B\in\Ical}\lambda_{A,B}\mathbf c_e(B\cup\{i\}).
\]
It remains to identify the last expression with $\mathbf c_e(A\cup\{i\})$.  Fix
a row label $R\in\Vcal_{n,e}$.  Since $R\cup\{i\}\in\Vcal_{n,e+1}$, evaluating
the expansion of $\mathbf c_{e+1}(\cdot)$ at the row $R\cup\{i\}$ gives
\[
\begin{aligned}
    \sum_{B\in\Ical}\lambda_{A,B}\bigl(\mathbf c_e(B\cup\{i\})\bigr)_R
    &=\sum_{B\in\Ical}\lambda_{A,B}y_{R\cup B\cup\{i\}} \\
    &=\sum_{B\in\Ical}\lambda_{A,B}\bigl(\mathbf c_{e+1}(B)\bigr)_{R\cup\{i\}} \\
    &=\bigl(\mathbf c_{e+1}(A)\bigr)_{R\cup\{i\}} \\
    &=y_{R\cup A\cup\{i\}}
     =\bigl(\mathbf c_e(A\cup\{i\})\bigr)_R .
\end{aligned}
\]
This holds for every row $R\in\Vcal_{n,e}$, so
$T_i\mathbf c_e(A)=\mathbf c_e(A\cup\{i\})$.  The vector on the right is
well-defined because $|A\cup\{i\}|\le e+2\le 2d-e$, using $e\le d-1$.
\end{proof}

\begin{lemma}
\label{lem:elementary-multiplication-flat-level}
Assume $r_e=r_{e+1}>0$ and $e\le d-1$.  Let $C_e$ be the column space of
$H_e(y)$.  If $y$ satisfies the equations
\eqref{eq:moment-localizing-elementary}, then there are linear maps
\[
    T_1,\ldots,T_n:C_e\to C_e
\]
with the following properties:
\begin{enumerate}[label=(\roman*)]
    \item $T_i^2=T_i$ for every $i$;
    \item $T_iT_j=T_jT_i$ for every $i,j$;
    \item for every $\ell\in[m]$,
    \[
        \sum_{U\in\Vcal_{n,2}} c_{\ell,U}T_U=0
        \qquad\text{as a linear map on }C_e,
    \]
    where $T_U:=\prod_{i\in U}T_i$ and $T_\emptyset$ is the identity map.
\end{enumerate}
\end{lemma}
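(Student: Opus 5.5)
We take the maps $T_i$ from \Cref{lem:elementary-flatness-lifts-multiplication}, fixing a basis index set $\Ical\subseteq\Vcal_{n,e}$ of $C_e$. Since the columns $\mathbf c_e(A)$ with $A\in\Vcal_{n,e}$ span $C_e$, it suffices to check each identity on these spanning vectors. The main tool is \eqref{eq:Ti-all-truncated-columns}, namely $T_i\mathbf c_e(A)=\mathbf c_e(A\cup\{i\})$ for all $A\in\Vcal_{n,e+1}$. The point is that this identity applies not only to basis columns but to every truncated column one level up. Hence we may apply two operators in succession: if $|A|\le e$, then $A\cup\{j\}\in\Vcal_{n,e+1}$, so \eqref{eq:Ti-all-truncated-columns} can be applied again.

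For (i), take $A\in\Vcal_{n,e}$. We have $T_i\mathbf c_e(A)=\mathbf c_e(A\cup\{i\})$. Since $A\cup\{i\}\in\Vcal_{n,e+1}$, applying \eqref{eq:Ti-all-truncated-columns} again gives $T_i^2\mathbf c_e(A)=\mathbf c_e(A\cup\{i\})=T_i\mathbf c_e(A)$, because $(A\cup\{i\})\cup\{i\}=A\cup\{i\}$. For (ii), the same two-step application gives $T_jT_i\mathbf c_e(A)=\mathbf c_e(A\cup\{i,j\})=T_iT_j\mathbf c_e(A)$. In general, for $|U|\le 2$ and $A\in\Vcal_{n,e}$, we get $T_U\mathbf c_e(A)=\mathbf c_e(A\cup U)$. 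Each intermediate set has size at most $e+1$ before the next application, so the lemma applies at every step. The final vector is well-defined because $|A\cup U|\le e+2\le 2d-e$, using $e\le d-1$.

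For (iii), fix $\ell$ and $A\in\Vcal_{n,e}$. By the previous paragraph,
\[
  \Bigl(\sum_{U}c_{\ell,U}T_U\Bigr)\mathbf c_e(A)=\sum_U c_{\ell,U}\mathbf c_e(A\cup U).
\]
Its $R$-coordinate, for $R\in\Vcal_{n,e}$, equals $\sum_U c_{\ell,U}y_{U\cup W}$ with $W:=R\cup A$. Here $|W|\le 2e\le 2d-2$, so $W\in\Vcal_{n,2d-2}$, and this sum vanishes by \eqref{eq:moment-localizing-elementary}. Thus the operator kills a spanning set of $C_e$, so it is zero on $C_e$.

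The only real subtlety is the bookkeeping of degrees. We must make sure every iterated application of $T_i$ lands in the range $\Vcal_{n,e+1}$ where \eqref{eq:Ti-all-truncated-columns} is certified, and that the localizing constraint is available for $W=R\cup A$. Both hold because we work only on the spanning columns of level $e$ and $e\le d-1$. The hypothesis $r_e>0$ is not needed for these identities. It only ensures that $C_e\neq 0$, which the later eigenvector argument requires.
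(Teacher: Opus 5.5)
Your proof is correct and follows the paper's argument. Both take the maps $T_i$ from \Cref{lem:elementary-flatness-lifts-multiplication}, apply $T_i\mathbf c_e(A)=\mathbf c_e(A\cup\{i\})$ twice (the second time to $A\cup\{i\}\in\Vcal_{n,e+1}$) to get $T_U\mathbf c_e(A)=\mathbf c_e(A\cup U)$ for $|U|\le 2$, and then invoke the localizing constraints with $W=R\cup A$, $|W|\le 2e\le 2d-2$; the only cosmetic difference is that you verify the identities on the full spanning set of level-$e$ columns rather than on the basis columns indexed by $\Ical$, which changes nothing.
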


\begin{proof}
Choose $\Ical\subseteq\Vcal_{n,e}$ such that
$\{\mathbf c_e(B):B\in\Ical\}$ is a basis for $C_e$, and define the maps
$T_i$ as in \Cref{lem:elementary-flatness-lifts-multiplication}.  By
\Cref{lem:elementary-flatness-lifts-multiplication}, for every
$A\in\Vcal_{n,e+1}$ and every $i$,
\begin{equation}
\label{eq:Ti-multiplies-all-visible-columns}
    T_i\mathbf c_e(A)=\mathbf c_e(A\cup\{i\}).
\end{equation}

We first prove the Boolean multiplication rules.  Fix $B\in\Ical$ and
$i,j\in[n]$.  Taking $A=B$ in
\eqref{eq:Ti-multiplies-all-visible-columns} gives
$T_i\mathbf c_e(B)=\mathbf c_e(B\cup\{i\})$.  Since
$B\cup\{i\}\in\Vcal_{n,e+1}$, applying
\eqref{eq:Ti-multiplies-all-visible-columns} with $A=B\cup\{i\}$ and variable
$j$ gives
\begin{equation}
\label{eq:TjTi-by-lifting-lemma}
    T_jT_i\mathbf c_e(B)
    =T_j\mathbf c_e(B\cup\{i\})
    =\mathbf c_e(B\cup\{i,j\}).
\end{equation}

Taking $j=i$ in \eqref{eq:TjTi-by-lifting-lemma} gives
$T_i^2\mathbf c_e(B)=T_i\mathbf c_e(B)$, because
$B\cup\{i,i\}=B\cup\{i\}$.  Since this holds on the basis vectors
$\{\mathbf c_e(B):B\in\Ical\}$, we get $T_i^2=T_i$.  Swapping $i$ and $j$ in
\eqref{eq:TjTi-by-lifting-lemma} gives
\[
    T_iT_j\mathbf c_e(B)=\mathbf c_e(B\cup\{j,i\})
    =\mathbf c_e(B\cup\{i,j\})=T_jT_i\mathbf c_e(B),
\]
again on every basis vector.  Hence $T_iT_j=T_jT_i$.

It remains to translate the original equations into operator identities.
Because the $T_i$ commute, $T_U:=\prod_{i\in U}T_i$ is well-defined.  From
\eqref{eq:Ti-multiplies-all-visible-columns} when $|U|\le 1$, and from
\eqref{eq:TjTi-by-lifting-lemma} when $|U|=2$, we have
\[
    T_U\mathbf c_e(B)=\mathbf c_e(B\cup U)
    \qquad\text{for every }B\in\Ical\text{ and every }U\in\Vcal_{n,2}.
\]
Therefore, for every row label $R\in\Vcal_{n,e}$,
\[
\begin{aligned}
    \left(\sum_{U\in\Vcal_{n,2}} c_{\ell,U}T_U\mathbf c_e(B)\right)_R
    &=\sum_{U\in\Vcal_{n,2}} c_{\ell,U}y_{R\cup B\cup U}.
\end{aligned}
\]
Here $R\cup B$ has size at most $2e$.  Since $e\le d-1$, we have
$2e\le 2d-2$, so the localizing equations
\eqref{eq:moment-localizing-elementary} apply with $W=R\cup B$.  Hence the
last sum is $0$.  Thus $\sum_{U\in\Vcal_{n,2}} c_{\ell,U}T_U$ kills every basis
vector of $C_e$, and therefore it is the zero map on $C_e$.
\end{proof}

\subsection{Decoding a satisfying assignment from commuting projections}
\label{subsec:decoding-commuting-projections}

The next lemma records the common-eigenvector property for commuting projections.

\begin{lemma}
\label{lem:elementary-common-eigenvector}
Let $C$ be a nonzero finite-dimensional vector space over $\F_q$, and let
$T_1,\ldots,T_n$ be commuting linear maps on $C$ with $T_i^2=T_i$ for every
$i$.  Then there are $v\in C\setminus\{0\}$ and scalars
$a_i\in\{0,1\}\subseteq\F_q$, $i\in[n]$, such that
\[
    T_i v=a_i v
    \qquad\text{for every }i=1,\ldots,n.
\]
\end{lemma}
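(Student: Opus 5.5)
I would prove \Cref{lem:elementary-common-eigenvector} by induction on $n$, successively refining a nonzero invariant subspace. The key observation is that an idempotent $T$ on a nonzero space $V$ splits it as $V=\ker T\oplus \operatorname{im} T$. Indeed, every $v$ decomposes as $v=(v-Tv)+Tv$, and the two pieces intersect trivially, since $T^2=T$. Moreover, $T$ acts as $0$ on $\ker T$ and as the identity on $\operatorname{im} T$. Because $T^2=T$ gives $T(1-T)=0$, both eigenvalues lie in $\{0,1\}$. This holds over any field, in particular $\F_q$, with no algebraic closure needed.

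The induction maintains a nonzero subspace $V_j\subseteq C$ that is invariant under all of $T_1,\ldots,T_n$, together with bits $a_1,\ldots,a_j\in\{0,1\}$ such that $T_iv=a_iv$ for all $v\in V_j$ and $i\le j$. We start with $V_0:=C$, which is nonzero by hypothesis. Given $V_j$, restrict $T_{j+1}$ to $V_j$; the restriction is still idempotent. Since $V_j=\ker(T_{j+1}|_{V_j})\oplus\operatorname{im}(T_{j+1}|_{V_j})$ and $V_j\ne0$, at least one summand is nonzero. Take $V_{j+1}$ to be the image with $a_{j+1}=1$ if the image is nonzero, and otherwise the kernel with $a_{j+1}=0$.

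The only point that needs care is that $V_{j+1}$ remains invariant under every $T_i$, and this is exactly where commutativity enters. If $v\in V_j$ satisfies $T_{j+1}v=v$, then $T_{j+1}(T_iv)=T_iT_{j+1}v=T_iv$, and $T_iv\in V_j$ by invariance of $V_j$. The same computation applies to the kernel. The eigenvalue conditions for $i\le j$ persist because $V_{j+1}\subseteq V_j$.

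After $n$ steps, $V_n$ is nonzero, so any nonzero $v\in V_n$ satisfies $T_iv=a_iv$ for all $i\in[n]$, which proves the lemma. No real obstacle is expected; the lemma is elementary.
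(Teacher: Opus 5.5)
Your proposal is correct and follows the paper's proof essentially verbatim: shrink a nonzero subspace one operator at a time, split it via the idempotent restriction of $T_{j+1}$ into kernel $\oplus$ image, keep a nonzero summand, and use commutativity to preserve invariance under the remaining operators. The only cosmetic difference is that you maintain invariance under all of $T_1,\ldots,T_n$ rather than just the remaining ones, which is harmless.
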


\begin{proof}
We shrink the space one operator at a time.  Start with $C^{(0)}:=C$.  At step
$i=1,\ldots,n$, suppose that $C^{(i-1)}$ is nonzero and is preserved by
$T_i,T_{i+1},\ldots,T_n$.  Since $C^{(i-1)}$ is preserved by $T_i$, the map
$T_i$ restricts to a projection on $C^{(i-1)}$.  Hence
\[
    C^{(i-1)}=\bigl(C^{(i-1)}\cap\ker(T_i)\bigr)
    \oplus \bigl(C^{(i-1)}\cap\operatorname{im}(T_i)\bigr).
\]
Since $C^{(i-1)}$ is nonzero, at least one of the two summands above is
nonzero.  Choose a nonzero summand and call it $C^{(i)}$.  The new subspace is
still preserved by $T_{i+1},\ldots,T_n$, because those operators commute with
$T_i$.  If we chose the kernel summand, then every $w\in C^{(i)}$ satisfies
$T_iw=0$.  If we chose the image summand, then every $w\in C^{(i)}$ satisfies
$T_iw=w$, since $w=T_iz$ implies $T_iw=T_i^2z=T_iz=w$.

After doing this for $i=1,\ldots,n$, the final space $C^{(n)}$ is still
nonzero.  Pick any nonzero $v\in C^{(n)}$.  For each $i$, the construction
places $v$ either in $\ker(T_i)$ or in $\operatorname{im}(T_i)$, so $T_iv=0$ or
$T_iv=v$.  Thus $T_iv=a_iv$ for some $a_i\in\{0,1\}\subseteq\F_q$ for every
$i$.
\end{proof}

\begin{lemma}
\label{lem:elementary-decode-eigenvector}
Let $C$ be a nonzero finite-dimensional vector space over $\F_q$, and let
$T_1,\ldots,T_n$ be commuting linear maps on $C$.  Suppose that, for every
$\ell\in[m]$,
\[
    \sum_{U\in\Vcal_{n,2}} c_{\ell,U}T_U=0
    \qquad\text{as a linear map on }C,
\]
where $T_U:=\prod_{i\in U}T_i$ and $T_\emptyset$ is the identity map.  If there
are $v\in C\setminus\{0\}$ and scalars
$a_1,\ldots,a_n\in\{0,1\}\subseteq\F_q$ such that $T_iv=a_iv$ for every $i$,
then $a=(a_1,\ldots,a_n)$ satisfies
$f_1(a)=\cdots=f_m(a)=0$.
\end{lemma}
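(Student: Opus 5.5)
The plan is to apply the operator identity to the common eigenvector $v$ and read off a scalar multiple of $v$. The first step is to show that for every $U\in\Vcal_{n,2}$ we have $T_Uv=a^Uv$, where $a^U=\prod_{i\in U}a_i$ (so $a^\emptyset=1$). For $U=\emptyset$ this is immediate, since $T_\emptyset$ is the identity and $a^\emptyset=1$. For $U=\{i\}$ it is the hypothesis $T_iv=a_iv$. For $U=\{i,j\}$ with $i\ne j$, we use that $T_U=T_iT_j$ is well-defined because the maps commute. Then
\[
    T_iT_jv=T_i(a_jv)=a_jT_iv=a_ja_iv,
\]
by linearity of $T_i$.

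Next, fix $\ell\in[m]$ and apply the zero map $\sum_{U}c_{\ell,U}T_U$ to $v$:
\[
    0=\sum_{U\in\Vcal_{n,2}}c_{\ell,U}T_Uv=\Bigl(\sum_{U\in\Vcal_{n,2}}c_{\ell,U}a^U\Bigr)v=f_\ell(a)\,v.
\]
The last equality uses that $a$ is a Boolean point, so the squarefree evaluation $a^U$ agrees with the evaluation of the monomial $x^U$ at $a$. Since $v\neq0$, the scalar $f_\ell(a)$ must vanish. This holds for every $\ell\in[m]$, so $f_1(a)=\cdots=f_m(a)=0$.

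No real obstacle is expected here. The only points requiring care are the convention that $T_\emptyset$ is the identity, which is what accounts for the constant term of $f_\ell$, and the use of commutativity so that $T_U$ does not depend on the order of the factors. Combined with \Cref{lem:elementary-common-eigenvector} and \Cref{lem:elementary-multiplication-flat-level}, this yields a Boolean solution in the NO case, which is a contradiction.
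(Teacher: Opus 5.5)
Your proposal is correct and follows the paper's proof essentially verbatim: establish $T_Uv=a^Uv$ for every $U\in\Vcal_{n,2}$, apply the vanishing operator $\sum_U c_{\ell,U}T_U$ to $v$ to obtain $f_\ell(a)v=0$, and conclude $f_\ell(a)=0$ from $v\neq0$. Booleanity of $a$ is not actually needed for $a^U=x^U(a)$, since the monomials are already squarefree, but that remark is harmless.
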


\begin{proof}
Since the operators commute, for every $U\in\Vcal_{n,2}$ we have
\[
    T_Uv=a^Uv.
\]
Therefore, for every $\ell$,
\[
    0=\Bigl(\sum_{U\in\Vcal_{n,2}}c_{\ell,U}T_U\Bigr)v
     =\Bigl(\sum_{U\in\Vcal_{n,2}}c_{\ell,U}a^U\Bigr)v
     =f_\ell(a)v.
\]
Since $v\ne 0$, the scalar $f_\ell(a)$ must be zero.  Hence the Boolean point
$a=(a_1,\ldots,a_n)$ satisfies all equations.
\end{proof}

We are now ready to prove \Cref{thm:elementary-moment-rank-gap}.

\begin{proof}[Proof of \Cref{thm:elementary-moment-rank-gap}]
Given the Boolean \textup{\textsc{QuadEq}} instance $f_1,\ldots,f_m$, set $d:=k$
and output
\[
    \Lcal:=\Lcal_d(f_1,\ldots,f_m).
\]
The matrix size is
\[
    N=|\Vcal_{n,d}|=\sum_{j=0}^{k}\binom{n}{j}=n^{O(k)},
\]
and the number of equal-union and localizing constraints is also
$(n+m)n^{O(k)}$, so the construction has the claimed running time.

If the Boolean \textup{\textsc{QuadEq}} instance has a solution, the
completeness argument above shows that $\Lcal$ contains a nonzero rank-$1$
matrix.  If the Boolean \textup{\textsc{QuadEq}} instance has no solution and
$A\in\Lcal$ is nonzero with $\operatorname{rank}(A)\le k$, then the unique
pseudo-moment vector $y$ associated with $A$ satisfies
$A=H_k(y)$ and \eqref{eq:moment-localizing-elementary}.  By
\Cref{lem:elementary-flat-level}, choose a flat level $e$ with
$r_e=r_{e+1}>0$.  By \Cref{lem:elementary-multiplication-flat-level}, the
column space $C_e$ carries commuting projections $T_1,\ldots,T_n$, and for
every $\ell\in[m]$,
\[
    \sum_{U\in\Vcal_{n,2}} c_{\ell,U}T_U=0.
\]
By \Cref{lem:elementary-common-eigenvector}, there are a nonzero vector
$v\in C_e$ and scalars $a_1,\ldots,a_n\in\{0,1\}\subseteq\F_q$ such that
$T_iv=a_iv$ for every $i$.  By \Cref{lem:elementary-decode-eigenvector},
$a$ is a Boolean solution, a contradiction.  Hence no such nonzero matrix
exists.
\end{proof}

The inapproximability statement \Cref{thm:intro-any-field-inapproximability} is a standard corollary of \Cref{thm:elementary-moment-rank-gap}; we defer the proof to Appendix~\ref{app:inapproximability-corollaries}.

%% file: sec/inapprox_corollaries.tex
\section{Proof of Inapproximability Corollaries}
\label{app:inapproximability-corollaries}

In this appendix we derive the two inapproximability statements from the two
rank-gap reductions proved in the body.

\maininapproximability*

\begin{proof}
Let $M$ denote the size of the input \textup{\textsc{3Sat}} instance.  By
\Cref{thm:rank-gap-subspace-v3}, there is an absolute constant $C$ such that the
reduction with rank parameter $k$ has matrix dimension and total output size at
most
\[
    N\le M^{C\log k}.
\]
In the YES case the optimum is $1$, while in the NO case it is larger than $k$.

For a constant factor $\gamma>1$, choose a constant integer $k\ge \gamma$.  The
reduction is polynomial time, and a polynomial-time distinguisher for the gap
$1$ versus $>\gamma$ would decide \textup{\textsc{3Sat}} in polynomial time.
This proves the first item.

Next fix $0<\epsilon<1$ and suppose that there is a polynomial-time distinguisher
for gap
\[
    \gamma(N)=2^{(\log N)^{1-\epsilon}}.
\]
Choose
\[
    k=\left\lceil 2^{(\log M)^a}\right\rceil
\]
for a constant $a>(1-\epsilon)/\epsilon$.  Then
\[
    \log N\le C(\log k)(\log M)=O\bigl((\log M)^{a+1}\bigr),
\]
and hence, for all sufficiently large $M$,
\[
    \log \gamma(N)=(\log N)^{1-\epsilon}
    \le O\bigl((\log M)^{(a+1)(1-\epsilon)}\bigr)
    <(\log M)^a\le \log k.
\]
Thus $\gamma(N)<k$.  Running the reduction and then the assumed polynomial-time
distinguisher would decide \textup{\textsc{3Sat}} in time
\[
    N^{O(1)}=2^{\log^{O(1)}M},
\]
contradicting
$\mathsf{NP}\nsubseteq \mathsf{DTIME}(2^{\log^{O(1)} n})$.

Finally, assume
\[
    \mathsf{NP}\nsubseteq\bigcap_{\delta>0}\mathsf{DTIME}(2^{n^\delta}).
\]
Since \textup{\textsc{3Sat}} is NP-complete under polynomial-time reductions,
there is a constant $\delta_0>0$ such that \textup{\textsc{3Sat}} is not in
$\mathsf{DTIME}(2^{M^{\delta_0}})$, where $M$ denotes the formula size.  Set
$\alpha:=\delta_0/2$ and choose
\[
    k=\left\lceil 2^{M^\alpha}\right\rceil .
\]
Then
\[
    \log N\le C M^\alpha\log M.
\]
For a sufficiently small constant $c>0$, depending only on $C$ and $\delta_0$,
we have, for all sufficiently large $M$,
\[
    \log\bigl(N^{c/\log\log N}\bigr)
    =\frac{c\log N}{\log\log N}
    < M^\alpha\le \log k.
\]
Thus $N^{c/\log\log N}<k$.  A polynomial-time distinguisher for this factor would
therefore decide \textup{\textsc{3Sat}} in time
\[
    N^{O(1)}=M^{O(\log k)}=2^{O(M^\alpha\log M)}\le 2^{M^{\delta_0}},
\]
contradicting the choice of $\delta_0$.
\end{proof}

\anyfieldinapproximability*

\begin{proof}
Fix a finite field $\F_q$, and let $M$ denote the input size of the Boolean
\textup{\textsc{QuadEq}} instance over $\F_q$.  By
\Cref{thm:elementary-moment-rank-gap}, there is an absolute constant $C$ such
that the reduction with rank parameter $k$ has matrix dimension and total output
size at most
\[
    N\le M^{Ck}.
\]
In the YES case the optimum is $1$, while in the NO case it is larger than $k$.
The source problem is NP-hard by \Cref{lem:quadeq-source-elementary}.

For a constant factor $\gamma>1$, choose a constant integer $k\ge\gamma$.  The
reduction is polynomial time, so a polynomial-time distinguisher for the gap
$1$ versus $>\gamma$ would imply $\mathsf{NP}=\mathsf{P}$.

Next fix $0<\epsilon<1$ and suppose there is a polynomial-time distinguisher for
\[
    \gamma(N)=(\log N)^{1-\epsilon}.
\]
Choose
\[
    k=\left\lceil(\log M)^a\right\rceil
\]
for a constant $a>(1-\epsilon)/\epsilon$.  Then
\[
    \log N\le Ck\log M=O\bigl((\log M)^{a+1}\bigr),
\]
and hence, for all sufficiently large $M$,
\[
    \gamma(N)
    = (\log N)^{1-\epsilon}
    \le O\bigl((\log M)^{(a+1)(1-\epsilon)}\bigr)
    < (\log M)^a\le k.
\]
Thus the distinguisher would solve Boolean \textup{\textsc{QuadEq}} over $\F_q$
in time
\[
    N^{O(1)}=M^{O(k)}=2^{\log^{O(1)}M},
\]
contradicting
$\mathsf{NP}\nsubseteq \mathsf{DTIME}(2^{\log^{O(1)} n})$.

Finally, assume
\[
    \mathsf{NP}\nsubseteq\bigcap_{\delta>0}\mathsf{DTIME}(2^{n^\delta}).
\]
Since Boolean \textup{\textsc{QuadEq}} over $\F_q$ is NP-hard under
polynomial-time reductions, there is a constant $\delta_0>0$ such that it is not
in $\mathsf{DTIME}(2^{M^{\delta_0}})$.  Set $\alpha:=\delta_0/2$ and choose
\[
    k=\left\lceil M^\alpha\right\rceil .
\]
Then
\[
    \log N\le C M^\alpha\log M.
\]
For a sufficiently small constant $c>0$, depending only on $C$ and $\delta_0$,
we have, for all sufficiently large $M$,
\[
    \frac{c\log N}{\log\log N}
    \le \frac{cC M^\alpha\log M}{\log(CM^\alpha\log M)}
    < M^\alpha\le k.
\]
Thus a polynomial-time distinguisher for the factor $c\log N/\log\log N$ would
solve Boolean \textup{\textsc{QuadEq}} over $\F_q$ in time
\[
    N^{O(1)}=M^{O(k)}=2^{O(M^\alpha\log M)}\le 2^{M^{\delta_0}},
\]
contradicting the choice of $\delta_0$.
\end{proof}

%% file: sec/proof_Khot_Saket.tex
\section{Proof of Lemmas from \cite{KhotSaket2014}}
\label{sec:proof-khot-saket}
\monomialassignmentsaspoints*
\begin{proof}
For each $a \in \F_2^{n+1}$, define the evaluation map $E_a : \F_2[x]_{\le d} \to \F_2$ by $E_a(q) = q(a)$. Then $E_a$ is in the dual space $\left(\F_2[x]_{\le d}\right)^*$. We claim that 
\[
\left(\F_2[x]_{\le d}\right)^*=\operatorname{span}\{E_a\}_{a \in \F_2^{n+1}}.
\]

To prove the claim, it suffices to show that if a polynomial $q \in \F_2[x]_{\le d}$ vanishes on all points in $\F_2^{n+1}$, then $q$ is the zero polynomial. Assume for contradiction that $q \neq 0$. Let $x^S$ be a monomial in $q$ with a non-zero coefficient $c_S = 1$ such that $S$ is \emph{minimal} with respect to set inclusion. Since $\deg(q) \le d$, we have $|S| \le d$. Let $\mathbbm{1}_S \in \F_2^{n+1}$ be the indicator vector of the set $S$. We now evaluate $q(\mathbbm{1}_S)$. A monomial $x^{S^{\prime}}$ evaluates to $1$ at $\mathbbm{1}_S$ if and only if $S^{\prime} \subseteq S$; otherwise, it evaluates to $0$. Thus,
\[
    q(\mathbbm{1}_S) = \sum_{S^{\prime} \subseteq S} c_{S^{\prime}}.
\]
By the minimality of $S$, all strict subsets $S^{\prime} \subsetneq S$ must have a coefficient $c_{S^{\prime}} = 0$ in $q$. Therefore, the sum collapses to $q(\mathbbm{1}_S) = c_S = 1 \neq 0$. This contradicts the assumption that $q$ vanishes on $\F_2^{n+1}$.

The extended assignment $\sigma$ is a linear functional on $\F_2[x]_{\le d}$, so it can be expressed as a linear combination of $\{E_a\}_{a \in \F_2^{n+1}}$. Since the field is $\F_2$, the coefficients must be $0$ or $1$. The subset $\beta \subseteq \mathbb{F}_2^{n+1}$ is simply the set of points $a$ whose corresponding functional $E_a$ has a coefficient of $1$, yielding $\sigma = \sum_{a \in \beta} E_a$. Therefore, $\sigma(q)=\sum_{a\in \beta} q(a)$ for every $q\in \mathbb{F}_2[x]_{\leq d}$.
\end{proof}

\symmetricdecomposition*

\begin{proof}
We argue by induction on $k=\operatorname{rank}(A)$.  The case $k=0$ is
immediate.  Assume $k>0$, and write $a_\ell$ for the $\ell$-th
column of $A$.

Suppose first that $A_{ii}=1$ for some $i$.  Put
$B:=A+a_ia_i^\top$.  The $i$-th column of $a_ia_i^\top$ is $a_i$, so the
$i$-th column of $B$ is zero. By symmetry, the $i$-th row of $B$ is also zero. Moreover, the column space of $B$ is
contained in the column space of $A$, and every column $a_\ell$ of $A$ can be
written as a column of $B$ plus a scalar multiple of $a_i$.  Hence
\[
    \operatorname{col}(A)=\operatorname{col}(B)+\operatorname{span}\{a_i\}.
\]
This sum is a direct sum, since every vector in $\operatorname{col}(B)$ has zero
$i$-th coordinate, whereas the $i$-th coordinate of $a_i$ is $1$.  Therefore
$\operatorname{rank}(B)=k-1$.  By induction, $B$ is a sum of at most
$\lfloor 3(k-1)/2\rfloor$ symmetric rank-one matrices of the form $uu^\top$.
Adding $a_ia_i^\top$ gives a decomposition of $A$ using at most
\[
    1+\lfloor 3(k-1)/2\rfloor \le \lfloor 3k/2\rfloor
\]
terms.

It remains to handle the situation in which all diagonal entries of $A$ are
zero.  Since $A$ is nonzero and symmetric, there are distinct indices $i,j$ with
$A_{ij}=A_{ji}=1$.  Put
\[
    B:=A+a_ia_j^\top+a_ja_i^\top .
\]
For this choice, the $i$-th and $j$-th columns of
$a_ia_j^\top+a_ja_i^\top$ are $a_i$ and $a_j$, respectively, and hence the
$i$-th and $j$-th rows and columns of $B$ are zero.  As above,
\[
    \operatorname{col}(A)=\operatorname{col}(B)+\operatorname{span}\{a_i,a_j\}.
\]
The sum is direct because every vector in $\operatorname{col}(B)$ vanishes on
coordinates $i$ and $j$, while the restrictions of $a_i$ and $a_j$ to these two
coordinates are $(0,1)^\top$ and $(1,0)^\top$.  Thus
$\operatorname{rank}(B)=k-2$.  Applying the induction hypothesis to $B$ and
using the identity, valid over $\F_2$,
\[
    a_ia_j^\top+a_ja_i^\top
    =a_ia_i^\top+a_ja_j^\top+(a_i+a_j)(a_i+a_j)^\top,
\]
we obtain a decomposition of $A$ with at most
\[
    3+\lfloor 3(k-2)/2\rfloor = \lfloor 3k/2\rfloor
\]
terms.  This completes the induction.
\end{proof}